
\documentclass[12pt]{iopart}

\usepackage{iopams,mathrsfs,cite,color}
\usepackage{todonotes,xcolor,amsthm,ulem}
\newcommand{\rem}[1]{}
\newcommand{\Rey}{\mathit{Re}}
\newcommand{\Gr}{\mathit{Gr}}
\newcommand{\bel}{\begin{equation}}
\newcommand{\ee}{\end{equation}}
\newcommand{\beq}{\begin{eqnarray}\label}
\newcommand{\eeq}{\end{eqnarray}}
\newcommand{\bu}{\bi{u}}
\newcommand{\bx}{\bi{x}}
\newcommand{\lgl}{\left\langle}
\newcommand{\rgl}{\right\rangle_{T}}

\newcommand{\OT}{\Or\left(T^{-1}\right)}
\newcommand{\Sum}[1]{\sum_{#1=1}^\infty}
\newcommand{\scrF}{\vert\mathscr{F}\vert}
\newcommand{\phexp}{{\vphantom{*}}} 
\newcommand{\supj}{\sup_{j\geqslant 1}}
\newcommand{\kmax}{k_{\mathrm{max}}}

\newtheorem{thm}{Theorem}
\newtheorem{lemma}{Lemma}
\newtheorem{corollary}{Corollary}
\newtheorem{remark}{Remark}

\definecolor{darkgreen}{HTML}{008000}
\definecolor{orange}{HTML}{FF7F00}


\begin{document} 
\title[Shell models and the Navier--Stokes equations]{\large How close are shell models to the $3D$ Navier--Stokes equations?}

\author{Dario Vincenzi$^1$\footnote{Also Associate, International Centre for Theoretical
Sciences, Tata Institute of Fundamental Research, Bangalore 560089, India.} and John D Gibbon$^2$}
\address{$^1$ Universit\'e C\^ote d'Azur, CNRS, LJAD, 06100 Nice, France}
\address{$^2$ Department of Mathematics, Imperial College London, London SW7 2AZ, UK}
\eads{\mailto{\mailto{dario.vincenzi@univ-cotedazur.fr}, j.d.gibbon@imperial.ac.uk}}

\begin{abstract}
Shell models have found wide application in the study of hydrodynamic turbulence because they are easily solved numerically even at very large Reynolds numbers. Although bereft of spatial variation, they accurately reproduce the main statistical properties of fully-developed homogeneous and isotropic turbulence. Moreover, they enjoy regularity properties which still remain open for the three-dimensional (3$D$) Navier--Stokes equations (NSEs). The goal of this study is to make a rigorous comparison between shell models and the NSEs. It turns out that only the estimate of the mean energy dissipation rate is the same in both systems. The estimates of
the velocity and its higher-order derivatives display a weaker Reynolds number dependence for shell models than for the $3D$ NSEs. Indeed,
the velocity-derivative estimates for shell models are found to be equivalent to those corresponding to a velocity gradient averaged version of the $3D$ Navier--Stokes equations (VGA-NSEs), while the velocity estimates are even milder. Numerical simulations over a wide range of Reynolds numbers confirm the estimates for shell models.
\end{abstract}





\section{Introduction}

\label{sect:introd}

Three-dimensional (3$D$) incompressible turbulent flows are characterized by a cascade of kinetic energy from the length scales at which the flow is generated to the scales at which viscous dissipation becomes predominant \cite{uriel,SrAn97,Sr99,PD2004}. Kinetic energy is usually injected at large scale by a body forcing or the boundary conditions, and it is transferred at a constant rate to smaller  scales by nonlinear interactions between the Fourier modes of the velocity. 
The cascade is strongly dissipated when the viscous-dissipation range is reached.
The range between the forcing and viscous scales is known as inertial range and is characterized by a kinetic-energy spectrum of the form $E(k)\sim k^{-5/3}$,
where $k$ is the wavenumber.
As viscosity is reduced, the dissipation range shrinks, while the inertial range extends to smaller and smaller length scales. The consequence of this is that, in the limit of vanishing viscosity, the mean energy dissipation rate tends to a nonzero value \cite{sreenivasan,CRD2009}, which is known as the `dissipative anomaly'.
\par\smallskip
A mathematically rigorous description of the generation of small scales in turbulent $3D$ Navier--Stokes flows can be achieved by considering the $L^{2m}$-norms of the velocity derivatives for weak solutions of the Navier--Stokes equations (NSEs) \cite{bdgm93,gibbon12,gibbon19,gibbon20}. Given a velocity field $\bi{u}$ over a periodic cube $\mathscr{V}=[0,\,L]^3$, volume integrals and norms can be defined as
\bel\label{Hnmdef}
H_{n,m}(t) = \int_{\mathscr{V}} \vert\bnabla^{n}\bi{u}\vert^{2m}\,\rmd V\,.
\ee
The well known scaling property of the NSEs $\bu(\bx,\,t) \to \mu^{-1}\bu\left(\bx/\mu,\,t/\mu^{2}\right)$
suggests the definition of a doubly-labelled set of dimensionless, invariant quantities \cite{gibbon20}
\bel\label{Fnmdef}
F_{n,m}(t) = \nu^{-1} L^{1/\alpha_{m,m}} H_{n,m}^{1/2m}\,,
\ee
for $0 \leqslant n < \infty$ and $1 \leqslant m \leqslant \infty$, where $\nu$ is the kinematic viscosity and 
\bel\label{alphadef}
\alpha_{n,m} = \frac{2m}{2m(n+1)-3}\,. 
\ee
It was shown in \cite{gibbon19,gibbon20} that for $1~\leqslant~n < \infty$ and $1~\leqslant~m~\leqslant~\infty$, together with $n = 0$ for $3 < m~\leqslant~\infty$
\bel\label{Fnmest1}
\lgl F_{n,m}^{\alpha_{n,m}}\rgl \leqslant~c_{n,m} \Rey^{3} + \Or\left(T^{-1}\right)
\qquad (\Rey \gg  1)\,,
\ee
where $\Rey$ is the Reynolds number and the time average up to time $T>0$ is defined by
\bel\label{timav}
\lgl\cdot\rgl = T^{-1}\int_{0}^{T}\cdot\;\rmd t\,.
\ee
Moreover, the set of estimates in (\ref{Fnmest1}) encompasses all the known a priori bounds for weak solutions of the $3D$ NSEs equations and shows how these bounds arise naturally from scale invariance \cite{gibbon19,gibbon20}. In those references it has been shown that a hierarchy of spatially averaged length scales $\ell_{n,m}(t)$ can be constructed from the $F_{n,m}$ in the following manner\,:
\bel\label{elldef}
\left(L\ell_{n,m}^{-1}\right)^{n+1} := F_{n,m}\,.
\ee
Higher values of $n$ allow the detection of smaller scales, while higher values of $m$ account for stronger deviations from the mean. Using (\ref{Fnmest1}) and (\ref{elldef}), followed by a H\"older inequality, one finds that
\bel\label{ellest}
\lgl L\ell_{n,m}^{-1}\rgl \leqslant~c_{n,m}\Rey^{\frac{3}{(n+1)\alpha_{n,m}}} \qquad (\Rey \gg 1)\,.
\ee
As noted in \cite{gibbon19}, while the estimate for the first in the hierarchy is $\Rey^{3/4}$ and is consistent with the inverse  Kolmogorov length, the limit as $n,\,m\to\infty$ is finite and is consistent with the fact that viscosity ultimately dissipates the cascade of energy\footnote{Strictly speaking there is a limit to the value of $\Rey$ beyond which kinetic scales are reached and the NSEs become invalid.}. 
\par\smallskip
The $\Rey$ dependence of the moments of $\bnabla\bi{u}$ has been studied within the multifractal formalism \cite{nelkin90,bbpvv91,uriel} and in numerical simulations of both the $3D$ NSE \cite{ssy07} and the Burgers equation \cite{cfpr12}. The calculation of $H_{n,m}$ for high $\Rey$ and large values of $n$ and $m$ nonetheless requires large numerical simulations
(see \cite{dggkpv13} for the $n=1$ case). At high $\Rey$, indeed, the injection and viscous-dissipation scales are widely separated, and the cascade process activates a wide range of length scales\,: an empirical argument due to Landau and Lifschitz~\cite{landau} indicates that the number of degrees of freedom of a turbulent velocity field grows as $\Rey^ {9/4}$(see also~\cite{uriel}). For this reason, the direct numerical simulation of high-$\Rey$ flows have remained a great challenge~\cite{MM1998,celani07,IGK,DYS,RMK2012}. In order to study the properties of fully developed turbulence, simplified models have thus been introduced that retain some of the properties of the NSEs but are much more tractable both theoretically and numerically. Among these, shell models of the energy cascade have played a major role  \cite{uriel,bohr,biferale,ditlevsen}. They consist of a system of ordinary differential equations for a set of complex scalar variables which can be regarded loosely as the amplitudes of the Fourier components of the velocity field. The structure of the equations mimics that of the NSEs in Fourier space. The nonlinear part has a form that recalls the vortex-stretching term, but the interactions between the velocity variables are local. A linear small-scale dissipation and generally a forcing are also included. Because of their scalar nature, shell models are unable to provide information on the spatial structure of the velocity field. However, they successfully reproduce the statistical properties of space-averaged quantities, such as the kinetic-energy spectrum, the velocity structure functions or the viscous-dissipation rate in isotropic turbulence. From a mathematical point of view, stronger results have been proved for shell models than for the $3D$ NSEs\,: for instance, the global regularity of strong solutions and the existence of a finite-dimensional inertial manifold~\cite{clt06,clt07} (see~\cite{flandoli} for analogous results on stochastic shell models).
\par\smallskip
Questions still remain, however, over how close the mathematical results of shell-models are to those for the NSEs. Shell-models are bereft of spatial variation while the behaviour of solutions of the NSEs equations differ widely depending upon their dimension. Indeed, although the notion of velocity gradient as a spatio-temporal field in shell models is not available, it is easy to define the analogue of the volume integral of powers of the velocity derivatives. For instance, the shell-model analogues of enstrophy and helicity have been studied extensively \cite{bohr,biferale,ditlevsen}. Nevertheless, it is not clear where the exact correspondence lies. The goal of this paper is to investigate the analogue of $H_{n,m}$ for shell models, both mathematically and numerically, in relation to the NSEs equations to see if there is a consistent correspondence between the two.
\par\smallskip
The paper is organized in the following steps. \Sref{sect:definitions} introduces the shell model and the mathematical framework. We consider the `Sabra' model \cite{sabra}, but the results are general and, in particular, also hold for the GOY model \cite{gledzer,oy87}, the only difference being in the constants that appear in the estimates.
\par\smallskip
The starting point of our study is a bound for the mean energy-dissipation rate, which corresponds to the $m=n=1$ case. This is obtained in \sref{sect:energy} by adapting to shell models the methods used by Doering and Foias \cite{df02} for the NSEs (see also \cite{giorgio,ggpp18} for the application of the same methods to magnetohydrodynamics and binary mixtures). The estimate for the dissipation rate coincide, as expected, with that for the $3D$ NSEs.
\par\smallskip
In sections~\ref{sect:lis} and~\ref{sect:FGT}, we keep $m=1$ but move to general $n$, i.e. we study $H_{n}\equiv H_{n,1}$, the analogue of the $L^2$-norm of the $n$-th order derivative of the velocity. We first prove two differential relations connecting $H_{n}$ and $H_{n+1}$, in the spirit of the ``ladder'' relations available for the NSEs \cite{bdg91,dg95}. Following the strategy applied in \cite{CF88,dg95,FMRT}, we then use these relations together with the bound for the energy dissipation rate to prove the existence of absorbing balls for all $H_n$ and to estimate the time average $\lgl H_{n}^{\frac{2}{n+1}}\rgl$ in terms of $\Rey$. This latter result is the counterpart of a bound proved by Foias, Guillop\'e and Temam \cite{fgt} for weak solutions of the $3D$ NSEs. It is further extended to general $n$ and $m$ in \sref{sect:HNM}, where it is shown that, in terms of the $\alpha_{n,m}$ defined above in (\ref{alphadef}) and \eref{Fnmest1}, the shell-model equivalent is 
\bel\label{alphasmdef}
\alpha_{n,m} = \frac{4}{n+1}\,,
\ee
which is independent of $m$.  
\par\smallskip
The form of these bounds and, in particular, their insensitivity to $m$, raise the question of how close these results are to the NSEs in any dimension. 
Comparing \eref{alphasmdef} with \eref{alphadef}, we find that $\alpha_{n,m}$ is greater for shell models
than for the $3D$ NSEs for all $n>1$ and $1\leqslant m\leqslant 0$. Therefore, the $\Rey$ dependence of the high-order velocity derivatives differs in the two systems and, in shell models, is significantly weaker.
It is indeed discovered in \sref{sect:relax} that, as far as the velocity-derivative estimates are concerned, the real PDE-equivalent of the shell models considered here is not the full $3D$ NSEs themselves but a version of these that we have called the `velocity gradient averaged Navier--Stokes equations' (VGA-NSEs).
While less specific in its definition as intermittency in multi-fractal theories \cite{uriel}, intermittent events in solutions of the NSEs have the property that excursions in $\bnabla\bu$ depart strongly from its average $\|\bnabla\bu\|_{2}$, thereby implying that for very short periods of time
\bel\label{im1}
L^{3/2}\,\frac{\|\bnabla\bu\|_{\infty}}{\|\bnabla\bu\|_{2}} \gg 1\,,
\ee
whereas making the approximation
\bel\label{rnse1}
L^{3/2}\,\frac{\|\bnabla\bu\|_{\infty}}{\|\bnabla\bu\|_{2}} = 1\,,
\ee
has the effect of suppressing strong events in $\bnabla\bu$. 
The VGA-NSEs are obtained by using \eref{rnse1} in the differential inequalities for the NSEs.
In fact, it can be thought of in the following way\,: a Gagliardo--Nirenberg inequality shows that 
\bel\label{im2}
\frac{\|\bnabla\bu\|_{\infty}}{\|\bnabla\bu\|_{2}} \leqslant c_{n} \kappa_{n}^{3/2}\,,
\qquad\qquad
\kappa_{n} = \left(\frac{\|\bnabla^{n+1}\bu\|_{2}}{\|\bnabla\bu\|_{2}}\right)^{1/n}\,.
\ee
The wave-number $\kappa_{n}(t)$ behaves as a higher moment of the enstrophy spectrum and has a lower bound expressed as $L^{-1} \leq \kappa_{n}(t)$. (\ref{rnse1}) occurs when one uses only the minimum of the right-hand side of the inequality in (\ref{im2}). One of the main results of this paper is that for all $n \geqslant 1$ and $1 \leqslant m \leqslant \infty$, the bounds and the exponents in the various bounded time-averages of VGA-NSEs and the shell models are equivalent.

The velocity estimates for the shell models are even milder than those for the VGA-NSEs. Indeed, it is shown that for the VGA-NSEs
\bel\label{rel2}
\left<\|\bu\|_{m}^{2}\right>_{T} \leqslant c_{m}\,(\nu L^{-1})^{2}\Rey^{3} \qquad (m\geqslant 3)\,,
\ee
\noindent
whereas the shell-model analogues of $\left<\|\bu\|_{m}^{2}\right>_{T}$ scale as $\Rey^2$.
\par\smallskip
Finally, \sref{sect:conclusions} concludes the paper by summarizing the estimates for shell models and
comparing them with numerical simulations of the `Sabra' model over a wide range of values of $\Rey$.

\section{The `Sabra' shell model}\label{sect:definitions}

Shell models of turbulence describe the velocity field by means of a sequence of complex variables $u_j$, $j=1,2,3,\dots$, which represent its Fourier components. In the `Sabra' model \cite{sabra}, the variables $u_j$ satisfy the following equations\,:
\begin{equation}
  \dot{u}^{\vphantom{*}}_{j} =
  \rmi(a k^{\vphantom{*}}_{j+1} u^*_{j+1} u^{\vphantom{*}}_{j+2}
  +b k^{\vphantom{*}}_j u^{\vphantom{*}}_{j+1} u^*_{j-1}
  -c k^{\vphantom{*}}_{j-1} u^{\vphantom{*}}_{j-1} u^{\vphantom{*}}_{j-2})
  -\nu k_j^2 u^{\vphantom{*}}_j + f^{\vphantom{*}}_{j}\, ,
\label{eq:shell}
\end{equation}
where $k_j=k_0\lambda^j$ ($k_0>0$, $\lambda>1$) are logarithmically-spaced wave numbers, $\nu$ is the kinematic viscosity, and the $f_j$ are complex and represent the Fourier amplitudes of the forcing. The `boundary conditions' for the velocity variables are $u_0=u_{-1}=0$, while $k_0^{-1}$ plays the role of the largest spatial scale in the system. The coefficients $a$, $b$, $c$ are real and satisfy
\begin{equation}\label{eq:constraint}
a + b + c = 0\,.
\end{equation}
This condition ensures that the kinetic energy,
\begin{equation}
E=\Sum{j}\vert u_j\vert^{2}\,,
\end{equation}
is conserved when $\nu=0$ and $f_j=0$ for all $j$. In the inviscid, unforced case and under condition~\eref{eq:constraint},
the shell model also possesses a second quadratic invariant,
which for suitable values of $a$, $b$, $c$ can be interpreted as either a generalized helicity or a generalized enstrophy
\cite{bohr}. The parameters of the shell model can also be tuned so as to generate an inverse cascade of energy from small to large scales, as in two-dimensional turbulence \cite{gilbert}. In the following, however, we shall not impose any additional
constraint on $a$, $b$, $c$ other than~\eref{eq:constraint}.
\par\smallskip
Various forcings have been considered in the literature, such as those that act only on few low-$j$ shells and mimic the injection of energy at large scales \cite{oy87,sabra}, those that impose a constant energy input \cite{bcr00}, or those with a power-law `spectrum'~\cite{mov02,bclst04}.  We consider a constant-in-time deterministic forcing, but the results are easily generalized to time-dependent $f_j$. Following \cite{df02}, we define the forcing in a way as to isolate its magnitude from its shape. We take
\begin{equation}\label{eq:forcing}
  f_j=\mathscr{F}\phi_{j-j_f}\,,
\end{equation}
where $\mathscr{F}$ is a complex constant, $j_f\geqslant 1$, and the shape function $\phi_p$ is such that $\phi_p=0$ if $p<0$. Thus, $k_f=k_0 \lambda^{j_f}$ is the characteristic wavenumber of the forcing. We also assume
\begin{equation}\label{eq:l2-forcing}
\sum_{p=0}^\infty\vert \phi_p\vert^{2} < \infty
\end{equation}
and 
\begin{equation}\label{eq:normalization}
\sum_{p=0}^\infty \lambda^{-2p}\vert \phi_p\vert^2=1\,.
\end{equation}
The assumption in~\eref{eq:l2-forcing} means that the `energy' of the forcing is finite, while \eref{eq:normalization} is a normalization condition on the shape function. 
In sections~\ref{sect:HN} and~\ref{sect:relax}, we shall further require that the forcing has a maximum wavenumber $k_{\rm max}$.

Under assumption~\eref{eq:l2-forcing}, it was shown in \cite{clt06} that if the energy $E$ is bounded at time $t=0$, then it stays bounded at any later time. This allows us to define the root mean square velocity
\begin{equation}
U = \lgl E\rgl^{1/2},
\end{equation}
where the time average up to time $T$ has been introduced in \eref{timav}. In addition, the time-averaged dissipation rate
\begin{equation}\label{eq:def-epsilon}
\epsilon = \nu\lgl\Sum{j}k_j^2\vert u^{\vphantom{*}}_j\vert^2\rgl
\end{equation}
is also bounded for all $T>0$ \cite{clt06}. By using $U$, $k_f$, and $\vert\mathscr{F}\vert$, we can then define the Reynolds and Grashof numbers as 
\begin{equation}\label{Reydef}
  \Rey = \frac{U}{\nu k_f}\qquad\mbox{and}\qquad \Gr = \frac{\vert\mathscr{F}\vert}{\nu^2 k_f^3},
\end{equation}
respectively. The latter is a dimensionless measure of the forcing, while the former quantifies the response of the system.
\par\smallskip
Finally, we note that it is possible to introduce a suitable functional setting for the study of \eref{eq:shell}, in which a solution $\bi u=(u_1,u_2\dots)$ is regarded as an element of the sequences space $\ell^2$ over the field of complex numbers, with scalar product $(\bi{u},\bi{v})=\Sum{j}u^{\vphantom{*}}_j v_j^*$ for any $\bi{u},\,\bi{v} \in \ell^2$ \cite{clt06}. Here, however, we  follow the physical notation and work with the variables $u_j$ directly.

\section{The time-averaged energy dissipation rate}\label{sect:energy}

An estimate of the time-averaged energy dissipation rate $\epsilon$, defined in (\ref{eq:def-epsilon}), is an essential element of the present study as results on the high-order derivatives of the velocity are based on this. It was once conventional to write estimates for $\epsilon$ in terms of the Grashof number $\Gr$ until Doering and Foias~\cite{df02} introduced a method that converted these into estimates in terms of the Reynolds number $\Rey$, which is much more useful for comparison with other theories of turbulence. The methods used here are adapted from Doering and Foias~\cite{df02}.
\par\smallskip
Let us first introduce the constants that will appear in the estimate for $\epsilon$\,:
\begin{equation}
  \fl
  A = 
  \vert a\vert\lambda+\vert b\vert + \vert a+b\vert\lambda^{-1},
  \qquad
  B_\gamma = \sup_{p\geqslant 0}\lambda^{-(2\gamma-1)p}\vert\phi_p\vert,
  \qquad
  C_\gamma=\sum_{p=0}^\infty \lambda^{2\gamma p}\vert\phi_p\vert^2.
\end{equation}
$A$ is a function of the parameters of the shell model, while $B_\gamma$ and $C_{\gamma}$ are fixed by the shape of the forcing. The exponent $\gamma$ is a real number and must be such that $B_\gamma$ and $C_\gamma$ are finite;
in particular, the normalization condition in \eref{eq:normalization} implies $C_{-1}=1$. It is important to stress that $B_\gamma$ and $C_\gamma$ depend neither on the amplitude nor on the characteristic wavenumber of the forcing. We shall also make use of the following result\,: 
\begin{lemma}\label{lemma}
For any $\gamma\in\mathbb{R}$ such that $C_\gamma$ is finite,
\begin{equation}
      \Sum{j}k_j^{2\gamma}\vert f^{\vphantom{*}}_j\vert^2  = 
      C_{\gamma} k_f^{2\gamma} \scrF^{2}\,.
\end{equation}
\end{lemma}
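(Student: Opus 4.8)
The plan is to compute the series directly from the definition of the forcing in~\eref{eq:forcing}. First I would substitute $f_j = \mathscr{F}\phi_{j-j_f}$ and $k_j = k_0\lambda^j$ into the left-hand side, so that
\begin{equation}
\Sum{j}k_j^{2\gamma}\vert f^{\vphantom{*}}_j\vert^2 = \scrF^{2}\Sum{j}(k_0\lambda^{j})^{2\gamma}\vert\phi_{j-j_f}\vert^2\,.
\end{equation}
Then I would change the summation index to $p = j - j_f$; since $\phi_p = 0$ for $p < 0$, the sum over $j\geqslant 1$ can be extended to a sum over all $j\geqslant 1$ but the shape function forces $p\geqslant 0$ (one should check that $j_f\geqslant 1$ guarantees no nonzero terms are lost, which holds because $\phi_p=0$ for $p<0$ kills the terms with $j<j_f$, and $j_f\geqslant 1$ means those are exactly the excluded low-$j$ terms or already zero). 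Writing $j = p + j_f$ gives $k_j^{2\gamma} = k_0^{2\gamma}\lambda^{2\gamma(p+j_f)} = (k_0\lambda^{j_f})^{2\gamma}\lambda^{2\gamma p} = k_f^{2\gamma}\lambda^{2\gamma p}$, so
\begin{equation}
\Sum{j}k_j^{2\gamma}\vert f^{\vphantom{*}}_j\vert^2 = \scrF^{2}\,k_f^{2\gamma}\sum_{p=0}^\infty\lambda^{2\gamma p}\vert\phi_p\vert^2 = C_\gamma\,k_f^{2\gamma}\,\scrF^{2}\,,
\end{equation}
which is the claimed identity, the last step being just the definition of $C_\gamma$.

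There is really no substantive obstacle here: the statement is essentially a bookkeeping identity that isolates the magnitude $\scrF$ and the forcing wavenumber $k_f$ from the shape-dependent constant $C_\gamma$, and the only thing to be careful about is the reindexing and the convergence, which is exactly the hypothesis that $C_\gamma < \infty$. The mild point worth a sentence is justifying that the shift of index is legitimate — i.e. that the terms $j = 1,\dots$ with $j - j_f < 0$ contribute nothing (they do, since $\phi_p = 0$ there) — so the infinite sum over $j$ and the infinite sum over $p\geqslant 0$ genuinely agree. Everything else is a one-line computation.
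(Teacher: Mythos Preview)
Your proposal is correct and follows essentially the same route as the paper: substitute the definition of the forcing, reindex the sum via $p=j-j_f$ using $\phi_p=0$ for $p<0$, factor out $k_f^{2\gamma}$, and recognise the remaining series as $C_\gamma$. The extra sentence you include about why the index shift loses no nonzero terms is a reasonable clarification, but otherwise the arguments are identical.
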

\begin{proof} By using the definition of the forcing in \eref{eq:forcing} and rearranging the terms in the sum, we obtain\,:
\begin{eqnarray}
  \fl
  \Sum{j}k_j^{2\gamma}\vert f^{\vphantom{*}}_j\vert^2  
  & =  \scrF^2 \sum_{j=j_f}^{\infty}k_j^{2\gamma}
  \vert\phi^{\vphantom{*}}_{j-j_f}\vert^2
  = \scrF^2 \sum_{p=0}^{\infty}k_{p+j_f}^{2\gamma}
  \vert\phi^{\vphantom{*}}_{p}\vert^2
  \\
  & =  \scrF^2 \sum_{p=0}^{\infty}k_0^{2\gamma}
  \lambda^{2\gamma(p+j_f)}\vert\phi^{\vphantom{*}}_{p}\vert^2
  = \scrF^2 (k_0\lambda^{j_f})^{2\gamma}\sum_{p=0}^{\infty}\lambda^{2p\gamma}
  \vert\phi_{p}\vert^2.
  \label{eq:lemma-c_gamma}
\end{eqnarray}
Replacing the definitions of $k_f$ and $C_\gamma$ in \eref{eq:lemma-c_gamma} yields the result.
\end{proof}
\par\smallskip
As discussed above at the beginning of this section we now use the method of Doering and Foias~\cite{df02} to estimate the time-averaged dissipation rate $\epsilon$.
\begin{thm}\label{thm:epsilon}
Let the forcing $(f_1, f_2, \dots)$ be as in Sect.~\ref{sect:definitions} and the initial energy $E(0)$ be bounded. Then the time-averaged energy dissipation rate satisfies
\begin{equation}
\label{eq:ineq-Re}
\epsilon \leqslant \nu^3 k_f^4
\left(c_1{\Rey}^2+c_2 \Rey^3\right) + \Or(T^{-1})\,,
\end{equation}
where the constants
\begin{equation}
  c_{1} = \frac{\sqrt{C_0 C_{2(1-\gamma)}}}{C_{-\gamma}}\,,
  \qquad
  c_{2} = A\frac{\sqrt{C_0} B_{\gamma}}{C_{-\gamma}}\,,
\end{equation}
depend on the parameters $a$, $b$, $\lambda$ of the shell model and on the shape of the forcing $(\phi_1,\phi_2,\dots)$, but are uniform in $\nu$, $k_0$, $k_f$, $\scrF$.  The value of $\gamma\in\mathbb{R}$ may be chosen in a way as to minimize $c_1$ and $c_2$, but it must nonetheless be such that $C_{-\gamma}$, $C_{2(1-\gamma)}$, $B_{\gamma}$ are finite.
\end{thm}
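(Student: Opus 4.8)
The plan is to adapt the Doering--Foias argument \cite{df02} and split the proof into two stages: (i) an energy-balance bound of the form $\epsilon \le \sqrt{C_0}\,\scrF\,U$, equivalently $\epsilon \le \sqrt{C_0}\,\Gr\,\Rey\,\nu^3 k_f^4$; and (ii) a bound on $\Gr$ in terms of $\Rey$ obtained by projecting \eref{eq:shell} onto the shape of the forcing. Substituting (ii) into (i) then gives \eref{eq:ineq-Re}.

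For stage (i), I would multiply \eref{eq:shell} by $u_j^*$, take the real part and sum over $j$: since $a+b+c=0$ the nonlinear term drops out of $\frac{1}{2}\dot E = -\nu\sum_j k_j^2|u_j|^2 + \mathrm{Re}\sum_j f_j u_j^*$, and time-averaging (using that $E(t)$ stays bounded \cite{clt06}) gives $\epsilon = \lgl\mathrm{Re}\sum_j f_j u_j^*\rgl + \OT$. A Cauchy--Schwarz inequality in $j$, Lemma~\ref{lemma} at $\gamma = 0$ (which gives $\sum_j|f_j|^2 = C_0\scrF^2$), and Jensen's inequality $\lgl E^{1/2}\rgl \le \lgl E\rgl^{1/2} = U$ then yield $\epsilon \le \sqrt{C_0}\,\scrF\,U + \OT$, which in dimensionless form is $\epsilon \le \sqrt{C_0}\,\Gr\,\Rey\,\nu^3 k_f^4 + \OT$.

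For stage (ii), I would take the inner product of \eref{eq:shell} with the sequence $g_j \propto (k_j/k_f)^{-2\gamma}\phi_{j-j_f}$, equipped with a unimodular phase so that the forcing term becomes real and equal to $\scrF\,C_{-\gamma}$; then take the real part, sum over $j$, and time-average. The time-derivative term is $\OT$ (note $\|g\|_{\ell^2}^2 = C_{-2\gamma}$, which is finite whenever $C_{-\gamma}$ and $C_{2(1-\gamma)}$ are, by monotonicity of $C_\bullet$ in its index). The viscous term is bounded, via Cauchy--Schwarz and Lemma~\ref{lemma}, by $\nu k_f^2\sqrt{C_{2(1-\gamma)}}\,U$. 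The nonlinear term is the delicate one: for each of the three monomials one factors out $\sup_j k_{j\pm1}|g_j|$, which equals $\lambda^{\pm1}k_f B_\gamma$ by the definition of $B_\gamma$ (the dependence on $j_f$ cancels), and bounds the remaining bilinear sum in the $u_j$ by $E$; since $|c| = |a+b|$, the three contributions combine to $A\,k_f B_\gamma\,\lgl E\rgl = A\,k_f B_\gamma\,U^2$. Dividing through by $C_{-\gamma}$ and passing to $\Gr$ and $\Rey$ gives $\Gr \le (AB_\gamma/C_{-\gamma})\Rey^2 + (\sqrt{C_{2(1-\gamma)}}/C_{-\gamma})\Rey + \OT$; inserting this into the stage (i) bound produces $\epsilon \le \nu^3 k_f^4(c_1\Rey^2 + c_2\Rey^3) + \OT$ with $c_1,c_2$ as in the statement, $\gamma$ being free subject only to $C_{-\gamma}$, $C_{2(1-\gamma)}$, $B_\gamma < \infty$.

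The step I expect to be the main obstacle is the nonlinear estimate in stage (ii), specifically the index bookkeeping: one must verify that each $\sup_j k_{j\pm1}|g_j|$ really collapses onto the single constant $B_\gamma$ independently of $j_f$, that the boundary conditions $u_0 = u_{-1} = 0$ do not spoil this, and that the bilinear sums $\sum_j|u_{j\pm1}||u_{j\pm2}|$ are controlled by $E$ with the correct numerical constant. A secondary subtlety is that $\Rey$ and $\Gr$ are themselves defined through the finite-time average $\lgl\cdot\rgl$, so the elimination of $\Gr$ in favour of $\Rey$ --- and the tracking of the $\OT$ remainders through the division by $\nu^2 k_f^3$ --- has to be done consistently at the level of time averages rather than pointwise in $t$.
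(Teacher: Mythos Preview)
Your proposal is correct and follows essentially the same route as the paper: stage (i) is the energy balance with Cauchy--Schwarz in $j$ and in $t$ (your Jensen step) to reach $\epsilon \leqslant \sqrt{C_0}\,U\scrF + \OT$, and stage (ii) is the projection of \eref{eq:shell} onto $k_j^{-2\gamma}f_j^*$ (which is, up to a scalar, exactly your $g_j$), leading to the bound \eref{eq:ineq-F} on $\scrF$ with the same treatment of the viscous term via Cauchy--Schwarz and Lemma~\ref{lemma}, and of the nonlinear term via $\sup_j \lambda^{-(2\gamma-1)(j-j_f)}|\phi_{j-j_f}| = B_\gamma$ together with $\sum_j|u_{j\pm1}||u_{j\pm2}|\leqslant E$. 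Your anticipated obstacles (index bookkeeping, boundary shells, and tracking $\OT$ through the time averages) are genuine but routine, and the paper handles them exactly as you outline.
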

\begin{remark}
The switch from $\Rey^{2}$ to $\Rey^{3}$ behaviour in (\ref{eq:ineq-Re}) is observed in the numerical computations displayed in Fig.~\ref{fig:1}.
\end{remark}
\begin{proof}
We begin by writing the evolution equation for the energy. Towards this end, we multiply \eref{eq:shell} by $u_j^*$ and the complex conjugate of \eref{eq:shell} by $u_j$. We then add the two resulting equations and sum over $j$\,:
\begin{equation}
\label{eq:energy}
\frac{\rmd E}{\rmd t}=-2\nu\Sum{j}k_j^2\vert u_j^\phexp\vert^{2} + \Sum{j}(f_j^\phexp u_j^*+f_j^* u_j^\phexp)\,.
\end{equation}
By integrating over time and using the Cauchy--Schwarz inequality twice on the last term, we obtain\,:
\begin{equation}
E(T) + 2\nu\int_0^T \left( \Sum{j}k_j^2\vert u_j^\phexp(t)\vert^2\right) \rmd t
\leqslant E(0)+2\sqrt{C_0}\,U\scrF\, T\,,
\end{equation}
whence
\begin{equation}
  \label{eq:epsilon-CS}
  \epsilon \leqslant \sqrt{C_0}\, U \scrF + \frac{E(0)}{2T}\,.
\end{equation}
To express this bound in terms of $\Rey$, we need to estimate $\scrF$ in terms of $U$ and $k_f$. We multiply \eref{eq:shell} by $k_j^{-2\gamma} f_j^*$, sum over $j$, and average over time\,:
\begin{eqnarray}
  \fl
  \lgl\Sum{j}k_j^{-2\gamma}f^*_j\dot{u}_j^\phexp\rgl=
  \Sum{j}k_j^{-2\gamma}\vert f_j^\phexp\vert^2  
  -\lgl\nu\Sum{j}k_j^{2-2\gamma}u_j^\phexp f_j^*\rgl
  \nonumber
  \\[1mm]
  +\lgl\rmi\Sum{j}k_j^{-2\gamma}f_j^* 
  \left(ak_{j+1}^\phexp u^*_{j+1}u_{j+2}^\phexp
  +b k_j^\phexp u_{j+1}^\phexp u^*_{j-1}
  -ck_{j-1}^\phexp u_{j-1}^\phexp u_{j-2}^\phexp \right)\rgl \,.
\label{eq:equality-F}
\end{eqnarray}
From \eref{eq:energy}, it is easy to see that $E(t)$ is bounded by a time-independent constant \cite{clt06}. This follows from using $k_j<k_1$ for all $j>1$ in the viscous term, the Cauchy--Schwarz inequality on the forcing term, and then Gronwall's inequality.
As a consequence, the left-hand side of \eref{eq:equality-F} is $\OT$.
\par\smallskip
The first term on the right-hand side is calculated from Lemma~\ref{lemma} as\,:
\begin{equation}\label{eq:term-1}
\Sum{j}k_j^{-2\gamma}\vert f_j\vert^2 = C_{-\gamma}^\phexp k_f^{-2\gamma} \scrF^{2}\,.
\end{equation}
The second term is estimated by using the Cauchy--Schwarz inequality\,: 
\begin{equation}\label{eq:term-2}
\left\vert \lgl\nu\Sum{j}(k_j^{2-2\gamma} f_j^*)u_j^\phexp\rgl \right\vert
\leqslant
\sqrt{C_{2-2\gamma}^\phexp}\, \nu U k_f^{2-2\gamma} \scrF\,.
\end{equation}
We estimate the third term by moving the forcing out of the sum and using again the Cauchy--Schwarz inequality\,:
\begin{eqnarray}
  \fl
  \left\vert
  \lgl\rmi\Sum{j}k_j^{-2\gamma}f_j^* 
  \left(a k_{j+1}^\phexp u^*_{j+1} u_{j+2}^\phexp
  +b k_j^\phexp u_{j+1}^\phexp u^*_{j-1} 
  -c k_{j-1}^\phexp u_{j-1}^\phexp u_{j-2}^\phexp \right)\rgl\right\vert
  &
  \nonumber
  \\[1mm]
  \fl
  \leqslant k_f^{-2\gamma+1}\scrF
  \left\vert\lgl
  \sum_{j=1}^\infty\lambda^{-(j-j_f)(2\gamma-1)}\phi_{j-j_f}
  \big(a\lambda u^*_{j+1} u_{j+2}^\phexp
  \right.\right. &
  \nonumber 
  \\*[-6mm]
  \left.\left.\vphantom{\Sum{j}}
  \qquad\qquad\qquad\qquad\qquad\quad
  +bu_{j+1}^\phexp u^*_{j-1} 
  +(a+b)\lambda^{-1} u_{j-1}^\phexp u_{j-2}^\phexp \big)
  \rgl\right\vert
  \nonumber
  \\[1mm]    
  \fl
  \leqslant
  A  B_{\gamma} U^2 k_f^{-2\gamma+1} \scrF\,.
  \label{eq:term-3}
\end{eqnarray}
We now combine \eref{eq:term-1} with \eref{eq:term-2} and \eref{eq:term-3} and find\,:
\begin{equation}\label{eq:ineq-F}
\scrF \leqslant \frac{\sqrt{C_{2-2\gamma}}}{C_{-\gamma}}\, \nu U k_f^{2} + 
A\frac{B_{\gamma}}{C_{-\gamma}}\, U^2 k_f +\OT \,.
\end{equation}
Inserting \eref{eq:ineq-F} into \eref{eq:epsilon-CS} and rearranging finally yields the estimate for $\epsilon$.
\end{proof}
The implications of \eref{eq:ineq-Re} for turbulent flows have been discussed thoroughly in \cite{df02} within the context of the $3D$ NSEs (see also \cite{dg02}). Here we briefly mention the shell-model counterpart of the main points\,:
\begin{enumerate}
\renewcommand{\theenumi}{\roman{enumi}}
\item 
The bound on $\epsilon$ can be rewritten as
\begin{equation}
\frac{\epsilon}{U^3 k_f}
\leqslant
\frac{c_1}{\Rey}+c_2 + \Or(T^{-1})\,.
\label{eq:epsilon-physical}
\end{equation}
Thus, in the high-$\Rey$ limit the saturation of the bound recovers the empirical prediction $\epsilon\sim U^3 k_f$ \cite{uriel}.
\item The estimate of $\epsilon$ can be converted into bounds for the Kolmogorov dissipation wavenumber $k_\eta=(\epsilon/\nu^3)^{1/4}$, the Taylor microscale $k_T=(\epsilon/\nu U^2)^{1/2}$, and the Taylor-microscale Reynolds number, $\Rey_\lambda=U/\nu k_T$. The saturation of these bounds for $\Rey\to\infty$ is consistent with the empirical predictions
$k_\eta\sim\Rey^{3/4}$, $k_T\sim\Rey^{1/2}$, $\Rey_\lambda\sim\Rey^{1/2}$ for $3D$ homogeneous and isotropic turbulence \cite{uriel}.
\item A lower bound for the time-averaged dissipation rate can also be derived by using the shell-model version of the Poincar\'e inequality\,:
\begin{equation}
\epsilon \geqslant
\nu k_1^2\lgl\Sum{j}\vert u_j\vert^2\rgl = \nu k_1^2 U^{2}\,,
\label{eq:dissipation}
\end{equation}
where we have used $k_1>k_j$ for all $j>1$. The latter bound can be rewritten as
\begin{equation}
\label{eq:epsilon-lower}
\frac{\epsilon}{U^3 k_f} \geqslant \left(\frac{k_1}{k_f}\right)^2 \Rey^{-1}\,.
\end{equation}
Therefore, the small-$\Rey$ scaling in \eref{eq:epsilon-physical} is sharp. Moreover, if we take $j_f=1$ and $\phi_p=\delta_{p,0}$, then $k_f=k_1$ and $c_1=1$. As a consequence, the upper and lower bounds on $\epsilon$ coincide for $\Rey\to 0$,
i.e. $\epsilon$ behaves as $\epsilon/U^3 k_f=\Rey^{-1}$. This means that the lower bound on $\epsilon$ is also optimal.
\item Dividing \eref{eq:ineq-F} by $\nu^2 k_f^3$ yields\,:
\begin{equation}\label{eq:Gr-Re}
  \Gr\leqslant c'_1 \Rey + c'_2 \Rey^{2}
\end{equation}
with $c'_1=c_1/\sqrt{C_0}$ and $c'_2=c_2/\sqrt{C_0}$. This bound establishes a relation between the forcing (represented by $\Gr$) and the response of the system (represented by $\Rey$).
\end{enumerate}
As mentioned earlier, the proof of Theorem~\ref{thm:epsilon} parallels that of Doering and Foias~\cite{df02} for the NSEs.
By using the same approach, it is possible to obtain estimates of $\epsilon$ in terms of $\Gr$ analogous to those available for the NSEs. It can indeed be shown that for $\Gr\to 0$ the lower and upper bounds on $\epsilon$ coincide, and hence
the time-averaged dissipation rate behaves as
$\epsilon=\nu^3 k_f^4 \Gr^2$, while for $\Gr\to\infty$ it satisfies the lower bound $\epsilon \geqslant c_3 \nu^3 k_1^2 k_f^2 \Gr$, where the constant $c_3$ is uniform in $\nu$, $k_0$, $\scrF$, and $k_f$. 
 
\section{High-order velocity derivatives}\label{sect:HN}

To investigate higher-order derivatives of the velocity, we now consider the sequence of infinite sums
\bel
H_n = \Sum{j} k_j^{2n} \vert u_j^\phexp\vert^2\,, \qquad n\geqslant 0\,,
\ee
which represent the shell-model analogues of the $L^2$-norms $\Vert\bnabla^n\bi{u}\Vert^2_{L^2}$.
Note that $H_0$ is the energy $E$, while the time average of $H_1$ is proportional to $\epsilon$\,:
\bel
\epsilon = \nu\lgl H_1\rgl\, .
\label{eq:epsilon-H_1}
\ee
We also denote the equivalent sums for the forcing variables as
\bel
\Phi_n=\Sum{j}k_j^{2n} \vert f_j\vert^2\,, \qquad n\geqslant 0\,.
\ee
Recall from Lemma~\ref{lemma} that $\Phi_n= C_n k_f^{2n} \scrF^2$\,.

\subsection{Ladder inequalities and absorbing balls for $H_n$}\label{sect:lis}

The following theorem shows that there exist two ladders of differential inequalities that connect 
$H_n$ and $H_{n+1}$ and reproduce the analogous ladder inequalities for the NSEs \cite{bdg91,dg95}.
\begin{thm} Let $n\geqslant 0$ and assume that the forcing $(f_1,f_2,\dots)$ is such that $\Phi_n<\infty$\,.
Then $H_n$ satisfies
\numparts
\begin{equation}\label{eq:ladder-grad}
\frac{1}{2}\dot{H}_n\leqslant -\nu H_{n+1}+c_n H_n  \supj k_{j}\vert u_{j}\vert + H_n^{\frac{1}{2}} \Phi_n^{\frac{1}{2}}
\end{equation}
and
\bel\label{eq:ladder-u}
\frac{1}{2}\dot{H}_n \leqslant -\frac{\nu}{2}\, H_{n+1} + \frac{d_n}{\nu}\, H_n  
\supj \vert u_{j}\vert^2 + H_n^{\frac{1}{2}} \Phi_n^{\frac{1}{2}} 
\ee
\endnumparts
with
\bel
c_n= \lambda^{-n+1}\left(\vert a\vert\lambda^{-2n}+\vert b\vert+\vert a+b\vert\lambda^{2n}\right)\, ,
\qquad 
d_n = \frac{c_n^2}{2\lambda^4}\, .
\label{eq:cn-dn}
\ee
\end{thm}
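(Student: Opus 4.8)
The plan is to compute $\dot H_n$ directly from the shell-model equations \eref{eq:shell}, following the pattern of the energy identity \eref{eq:energy} but with the weight $k_j^{2n}$ inserted. First I would multiply \eref{eq:shell} by $k_j^{2n} u_j^*$, add the complex conjugate, and sum over $j$. This produces three contributions: the viscous term, which yields exactly $-2\nu H_{n+1}$; the forcing term $\sum_j k_j^{2n}(f_j u_j^* + f_j^* u_j)$, which by Cauchy--Schwarz is bounded by $2\big(\sum_j k_j^{2n}\vert u_j\vert^2\big)^{1/2}\big(\sum_j k_j^{2n}\vert f_j\vert^2\big)^{1/2} = 2 H_n^{1/2}\Phi_n^{1/2}$; and the nonlinear term, which is the heart of the matter. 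After dividing by $2$ these reproduce the $-\nu H_{n+1}$ and $H_n^{1/2}\Phi_n^{1/2}$ pieces of both \eref{eq:ladder-grad} and \eref{eq:ladder-u}.

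The nonlinear contribution is $N_n = \mathrm{Re}\,\rmi\sum_j k_j^{2n} u_j^*\big(a k_{j+1} u_{j+1}^* u_{j+2} + b k_j u_{j+1} u_{j-1}^* - c k_{j-1} u_{j-1} u_{j-2}\big)$. The key step is to reindex each of the three sums so that they share a common triple of consecutive shells $(u_{j-1}, u_j, u_{j+1})$ — exactly the manipulation that proves energy conservation when $n=0$ using $a+b+c=0$. For general $n$ the three wavenumber prefactors no longer cancel, but collecting the coefficient of a fixed product $u_{j-1}^* u_j^* u_{j+1}$ (or its conjugate) gives a factor of the form $k_j^{2n}$ times a combination involving $a\lambda^{-2n}$, $b$, and $(a+b)\lambda^{2n}$ (using $c=-(a+b)$), whose modulus is controlled by $c_n$ as defined in \eref{eq:cn-dn}; the overall power of $\lambda$ is tracked through $k_{j\pm1} = \lambda^{\pm1}k_j$. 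Thus $\vert N_n\vert \leqslant c_n \sum_j k_j^{2n} \vert u_j\vert \,\vert u_{\cdot}\vert\,\vert u_{\cdot}\vert$ for appropriate neighboring indices. To get \eref{eq:ladder-grad} I would pull out one factor $k_j \vert u_j\vert$ as a supremum, $\supj k_j\vert u_j\vert$, leaving $\sum_j k_j^{2n-1}\vert u_j\vert \cdot(\text{neighbor})\cdot(\text{neighbor})$; bounding the remaining sum by $H_n$ via Cauchy--Schwarz (absorbing the residual powers of $\lambda$ into $c_n$) gives the stated inequality. For \eref{eq:ladder-u} I instead extract $\supj \vert u_j\vert^2$, leaving $\sum_j k_j^{2n}\vert u_j\vert\cdot\vert(\text{neighbor})\vert$, which after Cauchy--Schwarz is $\lesssim H_n^{1/2} H_{n+\frac12}^{1/2}$ — no, rather one keeps $k_j^{2n}$ on one factor and $k_j^{2n}$ would need a half-power shift — so instead I would write the surviving sum as bounded by a constant times $H_n^{1/2} H_{n+1}^{1/2}$ after redistributing one power of $k_j$, then apply Young's inequality $xy \leqslant \tfrac{\nu}{2}x^2/(c_n\!\cdot) + \tfrac{1}{2\nu}(\cdots)y^2$ to split off $\tfrac{\nu}{2}H_{n+1}$ and leave $\tfrac{d_n}{\nu} H_n \supj\vert u_j\vert^2$ with $d_n = c_n^2/2\lambda^4$; the factor $\lambda^{-4}$ is precisely the bookkeeping constant from moving a $k_j^{\pm1}$ across one shell and squaring.

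The main obstacle is the combinatorial reindexing of the nonlinear term and, in particular, verifying that the coefficient that multiplies a fixed monomial $u_{j-1}^* u_j^* u_{j+1}$ has modulus bounded by exactly the $c_n$ in \eref{eq:cn-dn} rather than some larger constant — this requires carefully matching which of the $a$-, $b$-, $c$-terms contributes to a given shell triple after shifting $j$ by $0$, $\pm1$, $\pm2$, and tracking the attendant powers $\lambda^{-2n}$, $\lambda^0$, $\lambda^{2n}$. A secondary subtlety is that the boundary conditions $u_0 = u_{-1} = 0$ must be used to discard the low-index terms generated by reindexing, so that no spurious boundary contributions survive; this is routine but needs to be stated. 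Once $\vert N_n\vert$ is in the form $c_n H_n \supj k_j\vert u_j\vert$ (resp. after Young's inequality for the second ladder), both \eref{eq:ladder-grad} and \eref{eq:ladder-u} follow immediately by collecting terms.
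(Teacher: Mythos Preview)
Your overall strategy is correct and would prove the theorem, but the route you take for the nonlinear term differs from the paper's and is more complicated than necessary. The paper does \emph{not} reindex the three sums to a common triple; instead it estimates each of the $a$-, $b$-, and $c$-terms separately. For instance, for the $a$-term one writes $k_j^{2n}k_{j+1}=\lambda^{1-3n}\,k_j\,k_{j+1}^{n}k_{j+2}^{n}$, pulls out $\sup_j k_j\vert u_j\vert$, and applies Cauchy--Schwarz to the remaining $\sum_j (k_{j+1}^n u_{j+1}^*)(k_{j+2}^n u_{j+2})\leqslant H_n$; the $b$- and $c$-terms are handled identically, and the three separate constants simply add up to $c_n$. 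This completely sidesteps the ``combinatorial reindexing'' that you identify as the main obstacle --- no common monomial needs to be isolated, and no boundary terms from shifting indices appear. Your reindexing approach does work and in fact produces the sharper constant $\lambda^{-n+1}\vert a\lambda^{-2n}+b-(a+b)\lambda^{2n}\vert\leqslant c_n$, but the extra bookkeeping buys nothing for the theorem as stated.

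For the second ladder \eref{eq:ladder-u}, once you abandon the false start of extracting $\sup_j\vert u_j\vert^2$ directly, your corrected plan --- pull out a single $\sup_j\vert u_j\vert$, redistribute one power of $k_j$ so that Cauchy--Schwarz yields $H_n^{1/2}H_{n+1}^{1/2}$, then apply Young's inequality --- is exactly what the paper does (again term by term rather than after combining), and the factor $\lambda^{-2}$ relative to the first ladder is precisely what produces $d_n=c_n^2/2\lambda^4$.
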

\begin{proof}
We multiply \eref{eq:shell} by $k_j^{2n}u_j^*$ and the complex conjugate of \eref{eq:shell} by $k_j^{2n}u_j$.
We then sum to obtain
\begin{eqnarray}
\label{eq:multiplication}
\fl
\dot{H}_n &=& -2\nu H_{n+1} 
\nonumber
\\
\fl
&& +\Sum{j} k_j^{2n} \left[\rmi u_j^* (a k_{j+1}^\phexp u^*_{j+1} u_{j+2}^\phexp
  +b k_j^\phexp u_{j+1}^\phexp u^*_{j-1} - c k_{j-1}^\phexp u_{j-1}^\phexp u_{j-2}^\phexp +f_j^\phexp)
  +{\rm c.c.}\right] \,, 
\end{eqnarray}
where `c.c.' stands for `complex conjugate'. The forcing term is estimated by using the Cauchy--Schwarz inequality\,:
\bel
\left\vert \Sum{j} k_j^{2n} u_j^* f_j^\phexp \right\vert =
\left\vert \Sum{j} (k_j^{n} u_j^*) (k_j^n f_j^\phexp) \right\vert
\leqslant H_n^{\frac{1}{2}} \Phi_n^\frac{1}{2}\,.
\label{eq:thm2-force}
\ee
Consider then the nonlinear term with coefficient $a$. We have 
\begin{eqnarray}
  \fl
  \left\vert a \Sum{n} k_j^{2n}k_{j+1}^\phexp u_j^* u^*_{j+1} u_{j+2}^\phexp \right\vert
  &\leqslant&
  \vert a\vert \lambda^{-3n+1} \supj \left(k_{j}\vert u_{j}\vert\right)
  \left\vert\Sum{j} (k_{j+1}^{n}u_{j+1}^*) (k_{j+2}^n u_{j+2}^\phexp)\right\vert
  \\
  &\leqslant& \vert a\vert \lambda^{-3n+1} H_n \supj \left(k_{j}\vert u_{j}\vert\right)  \,,
  \label{eq:thm2-grad}
\end{eqnarray}
where we have used $k_j^{n}=\lambda^{-np}k_{j+p}^{n}$ and the Cauchy--Schwarz inequality.
The terms with coefficients $b$ and $c=-(a+b)$ are treated in a similar manner.
The first ladder inequality is thus proved by using \eref{eq:thm2-force} and the estimates for the nonlinear terms in \eref{eq:multiplication}.

To prove \eref{eq:ladder-u}, we start again from~\eref{eq:multiplication}. The forcing term is estimated as above.
The term with coefficient $a$ is now manipulated as follows\,:
\begin{eqnarray}
    \fl
    \left\vert 
    a\Sum{j} k_j^{2n}k_{j+1}^\phexp u_j^* u^*_{j+1} u_{j+2}^\phexp \right\vert
    &\leqslant&
    \vert a\vert\lambda \supj \vert u_j\vert
    \left\vert\Sum{j}\left(k_j^{n} u^*_{j+1} \right) \left( k_j^{n+1}u_{j+2}^\phexp \right) \right\vert
    \\
    &\leqslant& 
    \vert a\vert \lambda^{-3n-1} H_{n\vphantom{+1}}^\frac{1}{2} H_{n+1}^\frac{1}{2}  \supj\vert u_j\vert \,,
\end{eqnarray}
where we have used the Cauchy--Schwarz inequality.
We then estimate the terms with coefficient $b$ and $c$ in a similar way and use Young's inequality to find
\begin{eqnarray}
\fl
\left\vert\Sum{j} k_j^{2n} u_j^* \left[(a k_{j+1}^\phexp u^*_{j+1} u_{j+2}^\phexp
  +b k_j^\phexp u_{j+1}^\phexp u^*_{j-1} - c k_{j-1}^\phexp u_{j-1}^\phexp u_{j-2}^\phexp)+\mathrm{c.c.}\right]
  \right\vert
\\
\leqslant 2 \sqrt{2d_n}\, H_{n\vphantom{+1}}^\frac{1}{2} H_{n+1}^\frac{1}{2} \supj\vert u_j\vert 
\leqslant \nu H_{n+1} + \frac{2d_n}{\nu}\, H_{n} \, \supj \vert u_j\vert^2 \,,
\label{eq:young}
\end{eqnarray}
where $d_n$ is defined in \eref{eq:cn-dn}.
Finally, we combine the first term on the right-hand side of \eref{eq:young}
with the viscous term in \eref{eq:multiplication} and add the estimate of the forcing term to get \eref{eq:ladder-u}.
\end{proof}

The structure of the ladder inequalities makes it evident that control over a low-$n$ rung of the ladder
automatically yields control over all the higher-order rungs \cite{bdg91,dg95}. 
Since $\supj \vert u_j\vert^2 \leqslant H_0$ and $H_0$ is bounded \cite{clt06},
inequality \eref{eq:ladder-force-2} can be used to prove that there are absorbing balls for all the $H_n$.
The existence of  absorbing balls was proved in \cite{clt06} by using different methods. Here we show how this result follows immediately
from the ladder inequalities and, in addition, 
we estimate the radius of the absorbing ball for $H_n$ under the assumption that $\Phi_n$ is finite.
\begin{corollary}
Let $n\geqslant 0$ and assume the forcing is such that $\Phi_n<\infty$, then
\bel
\limsup_{t\to\infty} H_n \leqslant \nu^2 k_f^{2(n+1)}\left[\tilde{d}_n\,\rho^{4(n+1)} Gr^{2(n+1)} + \widetilde{C}_n\,\rho^\frac{8}{n+2} Gr^2\right] \,,
\ee
where $\rho=k_f/k_1$ and 
\bel
\tilde{d}_{n} = 2^n d_n^n\,,  \quad \widetilde{C}_n =  2^\frac{2n}{n+2} C_n^\frac{n}{n+2}\,.
\ee
\end{corollary}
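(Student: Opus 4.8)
The plan is to use the second ladder inequality \eref{eq:ladder-u} together with the already-established boundedness of $H_0$ to derive an absorbing-ball estimate for each $H_n$ by induction on $n$. First I would note that since $\sup_{j\geqslant1}\vert u_j\vert^2\leqslant\Sum{j}\vert u_j\vert^2 = H_0$, and since an absorbing ball for $H_0$ is known from \cite{clt06} (indeed $\limsup_{t\to\infty}H_0$ is controlled by $\nu^2k_f^2$ times powers of $\Gr$ and $\rho$ via the energy balance \eref{eq:energy} and Lemma~\ref{lemma}), inequality \eref{eq:ladder-u} becomes a closed differential inequality for $H_n$ of the form $\tfrac12\dot H_n\leqslant -\tfrac{\nu}{2}H_{n+1} + \tfrac{d_n}{\nu}\mathcal{B}_0 H_n + H_n^{1/2}\Phi_n^{1/2}$, where $\mathcal{B}_0$ denotes the absorbing-ball radius for $H_0$.

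Next I would close the inequality at the $n$-th rung by bounding $H_{n+1}$ from below in terms of $H_n$. The natural tool is the interpolation inequality $H_n \leqslant H_0^{1/(n+1)} H_{n+1}^{n/(n+1)}$ (which follows from H\"older applied to $k_j^{2n}\vert u_j\vert^2 = (\vert u_j\vert^2)^{1/(n+1)}(k_j^{2(n+1)}\vert u_j\vert^2)^{n/(n+1)}$), so that $H_{n+1}\geqslant H_n^{(n+1)/n} H_0^{-1/n}$. Substituting the absorbing value of $H_0$ and discarding the now-manifestly-dissipative structure, one gets $\tfrac12\dot H_n \leqslant -c\,\nu\,\mathcal{B}_0^{-1/n} H_n^{(n+1)/n} + \tfrac{d_n}{\nu}\mathcal{B}_0 H_n + H_n^{1/2}\Phi_n^{1/2}$. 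Since the leading power $(n+1)/n > 1$ exceeds both $1$ and $\tfrac12$, the right-hand side is negative once $H_n$ is large, which yields an absorbing ball; the radius is read off by balancing the dissipative term against the larger of the two lower-order terms. Using $\Phi_n = C_n k_f^{2n}\scrF^2$ and $\scrF = \nu^2 k_f^3\Gr$, one term produces the $\Gr^{2(n+1)}$ contribution (with $\tilde d_n = 2^n d_n^n$ coming from iterating the constant $d_n/\nu$ against the dissipative coefficient, which carries a factor $k_1^2$ hidden in $\mathcal{B}_0^{-1/n}$ hence the power $\rho^{4(n+1)}$), and the forcing term produces the $\Gr^2$ contribution with the exponent $8/(n+2)$ and constant $\widetilde C_n = 2^{2n/(n+2)}C_n^{n/(n+2)}$ emerging from the Young-type balance between $H_n^{(n+1)/n}$ and $H_n^{1/2}\Phi_n^{1/2}$ (the algebra $\tfrac12 = \tfrac{n+1}{n}\cdot\theta$ forces $\theta$ and hence the exponent $1/(1-\theta)$-type factor to give $8/(n+2)$ after tracking $\rho$).

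Concretely the steps are: (1) record the $H_0$ absorbing ball from the energy equation and Lemma~\ref{lemma}; (2) insert $\sup_j\vert u_j\vert^2\leqslant H_0\leqslant\mathcal{B}_0$ into \eref{eq:ladder-u}; (3) establish the interpolation bound $H_{n+1}\geqslant H_n^{(n+1)/n}H_0^{-1/n}$ and use it to make the right-hand side a function of $H_n$ alone; (4) show the resulting autonomous differential inequality $\dot H_n \leqslant -P(H_n)$ has $P>0$ for $H_n$ beyond an explicit threshold, giving the absorbing ball; (5) compute that threshold, separating the $\Gr^{2(n+1)}$ and $\Gr^2$ regimes, and simplify the constants and $\rho$-powers to the stated form.

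The main obstacle I anticipate is step (5): keeping precise track of the dimensional factors ($\nu$, $k_1$, $k_f$, $\scrF$) and the combinatorial constants so that the two competing contributions come out exactly as $\tilde d_n\rho^{4(n+1)}\Gr^{2(n+1)}$ and $\widetilde C_n\rho^{8/(n+2)}\Gr^2$ — in particular verifying that the $H_0$-absorbing radius contributes the factor $\rho^{4(n+1)}$ (rather than some other power) through the $\mathcal{B}_0^{-1/n}$ appearing in the dissipative coefficient, and that the Young inequality balancing $H_n^{(n+1)/n}$ against $H_n^{1/2}\Phi_n^{1/2}$ genuinely yields the exponent $8/(n+2)$. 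A secondary technical point is justifying that $\sup_j\vert u_j\vert^2\leqslant H_0$ can be replaced by its $\limsup$ uniformly in the Gronwall-type argument, which is standard but should be stated.
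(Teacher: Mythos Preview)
Your plan is essentially the paper's own argument: bound $\sup_j\vert u_j\vert^2$ by $H_0$, use the interpolation $H_{n+1}\geqslant H_n^{(n+1)/n}H_0^{-1/n}$ in the second ladder inequality \eref{eq:ladder-u}, and balance the superlinear dissipative term against the two lower-order terms to read off the absorbing radius. Two small remarks: (i) the phrase ``by induction on $n$'' is a red herring --- your steps (1)--(5) go directly from $H_0$ to $H_n$ via interpolation with no inductive step, and that is exactly what the paper does; (ii) the paper avoids your ``secondary technical point'' by keeping $H_0$ symbolic in the differential inequality, deducing $\limsup H_n \leqslant 2^n d_n^n \nu^{-2n}(\limsup H_0)^{n+1} + 2^{2n/(n+2)}\nu^{-2n/(n+2)}\Phi_n^{n/(n+2)}(\limsup H_0)^{2/(n+2)}$, and only then inserting the known bounds $\limsup H_0\leqslant \nu^2\rho^4 k_f^2\Gr^2$ and $\Phi_n=C_n\nu^4 k_f^{2n+6}\Gr^2$ --- this makes the bookkeeping in your step (5) mechanical and sidesteps any uniformity issue in replacing $H_0$ by $\mathcal{B}_0$.
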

\begin{proof}
By using $\supj \vert u_j\vert^2\leqslant H_0$ and the inequality (see the Appendix for the proof)
\bel
\label{eq:triangle-ineq}
H_n \leqslant H_0^\frac{1}{n+1} H_{n+1}^\frac{n}{n+1} \,,
\ee
we rewrite \eref{eq:ladder-force-2} as
\bel
\dot{H}_n \leqslant - H_n \left[\nu\, \frac{H_n^\frac{1}{n}}{H_0^\frac{1}{n}}
- \frac{2d_n}{\nu}\, H_0  - 2\frac{\Phi_n^\frac{1}{2}}{H_n^\frac{1}{2}}\right] \,.
\ee
It follows that
\bel
\limsup_{t\to\infty} H_n \leqslant 2^n d_n^n \nu^{-2n} \limsup_{t\to\infty} H_0^{n+1}
+ 2^\frac{2n}{n+2} \nu^{-\frac{2n}{n+2}} \Phi_n^\frac{n}{n+2} \limsup_{t\to\infty} H_0^\frac{2}{n+2}\,.
\label{eq:limsup}
\ee
From Lemma~\ref{lemma}, we have
\bel
\label{eq:Phi}
\Phi_n= C_n k_f^{2n} \scrF^2 = C_n \nu^4 k_f^{2n+6} \Gr^2 \,.
\ee
In addition, it was shown in \cite{clt06} that
\bel
\label{eq:H0}
\limsup_{t\to\infty} H_0 \leqslant \nu^2 \left(\frac{k_f}{k_1}\right)^4 k_f^2 \Gr^2 \,.
\ee
Inserting \eref{eq:Phi} and \eref{eq:H0} into \eref{eq:limsup} yields the advertised result.
\end{proof}
It is also useful to reformulate the ladder inequalities in terms of the quantities
\bel
K_n = H_n + \tau^2 \Phi_n
\qquad \mathrm{with} \qquad
\tau = \nu^{-1} k_0^{-2}\,,
\ee
which incorporate the contribution of the forcing. This can be achieved under the additional assumption that the forcing has a cutoff in the spectrum, 
i.e.~there exists a maximum wavenumber $\kmax = k_0\lambda^{j_{\rm max}}$ such that $f_j=0$ for $j > 
j_{\mathrm{max}}$.
\begin{corollary}
If $n\geqslant 0$ and the forcing has a maximum wavenumber $\kmax$ and is such that $\Phi_n<\infty$, then $K_n$ satisfies
\numparts
\bel
\frac{1}{2}\,\dot{K}_n \leqslant -\nu K_{n+1} +c_n K_n  \supj k_{j}\vert u_{j}\vert + \nu\left(k_0^2+\kmax^2\right)K_n
\label{eq:ladder-force-1}
\ee 
and
\bel\label{eq:ladder-force-2}
\frac{1}{2}\,\dot{K}_n \leqslant -\frac{\nu}{2}\, K_{n+1} + \frac{d_n}{\nu}\, K_n  \supj\vert u_{j}\vert^2 + \nu\left(k_0^2+\kmax^2\right)K_n\,.
\ee 
\endnumparts
\end{corollary}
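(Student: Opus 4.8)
The plan is to derive the $K_n$-ladders directly from the $H_n$-ladders already proved, by adding $\tau^2 \Phi_n$ to both sides and controlling the extra $\dot{\Phi}_n$ contribution using the spectral cutoff. Since the forcing is constant in time, $\Phi_n$ is itself time-independent, so $\dot{K}_n = \dot{H}_n$ and the only real work is to re-express the right-hand sides of \eref{eq:ladder-grad} and \eref{eq:ladder-u} in terms of $K_n$ and $K_{n+1}$ rather than $H_n$ and $H_{n+1}$. The point of introducing the cutoff $\kmax$ is precisely that it makes $\Phi_{n+1}$ comparable to $\Phi_n$: since $f_j = 0$ for $j > j_{\mathrm{max}}$, one has $\Phi_{n+1} = \sum_j k_j^{2(n+1)}|f_j|^2 \leqslant \kmax^2 \sum_j k_j^{2n}|f_j|^2 = \kmax^2\,\Phi_n$. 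This is the inequality that lets me trade a factor of $\Phi_{n+1}$ for $\kmax^2\Phi_n$ whenever it appears.

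First I would start from \eref{eq:ladder-u}, which reads $\tfrac12\dot H_n \leqslant -\tfrac{\nu}{2}H_{n+1} + \tfrac{d_n}{\nu}H_n\supj|u_j|^2 + H_n^{1/2}\Phi_n^{1/2}$, and handle each term in turn. The viscous term: I want $-\tfrac{\nu}{2}K_{n+1}$ on the right, which equals $-\tfrac{\nu}{2}H_{n+1} - \tfrac{\nu}{2}\tau^2\Phi_{n+1}$; so I must \emph{add} $+\tfrac{\nu}{2}\tau^2\Phi_{n+1}$ back, and using $\Phi_{n+1}\leqslant \kmax^2\Phi_n$ and $\tau = \nu^{-1}k_0^{-2}$ this is bounded by $\tfrac{\nu}{2}\kmax^2\tau^2\Phi_n \leqslant \tfrac{\nu}{2}(\kmax^2/k_0^2)\,\tau^2\Phi_n \leqslant \nu\kmax^2 K_n$ — i.e.\ it gets absorbed into the linear $\nu(k_0^2+\kmax^2)K_n$ term. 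Wait, I should be a bit more careful: $\tfrac{\nu}{2}\kmax^2\tau^2\Phi_n = \tfrac{\nu}{2}\kmax^2\nu^{-2}k_0^{-4}\Phi_n$, and since $\tau^2\Phi_n \leqslant K_n$ this is $\leqslant \tfrac{\nu}{2}\kmax^2 k_0^{-4}\cdot(\text{something})$; the clean way is to note $\tfrac{\nu}{2}\tau^2\Phi_{n+1} \leqslant \nu \kmax^2 \tau^2 \Phi_n \leqslant \nu\kmax^2 K_n$ directly since $\kmax \geqslant k_0$ makes the bookkeeping go through. The middle term trivially satisfies $H_n\supj|u_j|^2 \leqslant K_n\supj|u_j|^2$ since $H_n \leqslant K_n$. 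The forcing term $H_n^{1/2}\Phi_n^{1/2}$ is the one place I need to be cleverest: I want to bound it by a multiple of $\nu(k_0^2+\kmax^2)K_n$. Write $H_n^{1/2}\Phi_n^{1/2} = H_n^{1/2}\cdot \tau^{-1}\cdot \tau\Phi_n^{1/2} \leqslant \tfrac{\tau^{-1}}{2}(H_n + \tau^2\Phi_n) = \tfrac{\tau^{-1}}{2}K_n = \tfrac{\nu k_0^2}{2}K_n$, which is indeed $\leqslant \nu(k_0^2+\kmax^2)K_n$. Summing the three bounds and replacing $\dot H_n$ by $\dot K_n$ gives \eref{eq:ladder-force-2} (possibly after absorbing the constant $\tfrac12$ from the AM--GM step into the linear coefficient, which is why the stated bound carries the full $\nu(k_0^2+\kmax^2)$ rather than $\nu\kmax^2$). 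The derivation of \eref{eq:ladder-force-1} from \eref{eq:ladder-grad} is identical: the gradient term $c_n H_n\supj k_j|u_j| \leqslant c_n K_n\supj k_j|u_j|$, the viscous and forcing terms are handled exactly as above, and the only difference is a factor $\nu$ instead of $\tfrac{\nu}{2}$ on the $K_{n+1}$ term, which makes the absorption of $\nu\tau^2\Phi_{n+1}$ even easier.

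The main obstacle is really just getting the constants in the linear term to come out as exactly $\nu(k_0^2+\kmax^2)$ rather than some larger multiple; this requires using $\tau = \nu^{-1}k_0^{-2}$ sharply (so that $\tau^{-1} = \nu k_0^2$ matches the $k_0^2$ piece) and noting that the cutoff bound $\Phi_{n+1}\leqslant \kmax^2\Phi_n$ delivers exactly the $\kmax^2$ piece when combined with the $\tfrac{\nu}{2}\tau^2$ prefactor, since $\tfrac{\nu}{2}\tau^2\cdot\kmax^2\Phi_n \cdot \tau^{-2} = \tfrac{\nu}{2}\kmax^2 \cdot \tau^2\Phi_n/\tau^2$... the point is that after careful bookkeeping the half-factors line up. If one is willing to be slightly lossy the estimate $\nu(k_0^2+\kmax^2)$ follows with room to spare. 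A secondary subtlety worth flagging in the proof is that everything here uses $f_j$ constant in time; for time-dependent forcing one would pick up a $\tau^2\dot\Phi_n$ term that would need its own bound, but under the standing assumptions of Sect.~\ref{sect:definitions} this does not arise.
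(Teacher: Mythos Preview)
Your approach is essentially the same as the paper's: note $\dot K_n=\dot H_n$, add and subtract $\nu\tau^2\Phi_{n+1}$ (resp.\ $\tfrac{\nu}{2}\tau^2\Phi_{n+1}$) to manufacture $-\nu K_{n+1}$, replace $H_n$ by $K_n$ in the nonlinear term, and control the leftover $\nu\tau^2\Phi_{n+1}$ via the cutoff bound $\Phi_{n+1}\leqslant \kmax^2\Phi_n$ together with $\tau^2\Phi_n\leqslant K_n$. The only cosmetic difference is in the forcing term: the paper uses $H_n\leqslant K_n$ and $\Phi_n\leqslant\tau^{-2}K_n$ directly to get $H_n^{1/2}\Phi_n^{1/2}\leqslant \tau^{-1}K_n=\nu k_0^2 K_n$, whereas you use AM--GM to get $\tfrac{1}{2}\nu k_0^2 K_n$; either way the sum of the two residual terms is bounded by $\nu(k_0^2+\kmax^2)K_n$, so your anxiety about the constants is unfounded---there is in fact room to spare, not a tight match to engineer.
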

\begin{proof}
The strategy for deriving \eref{eq:ladder-grad} from \eref{eq:ladder-force-1} is the same as for the NSEs \cite{bdgm93,dg95}.
Note first that $\dot{H}_n=\dot{K}_n$. Then, add and substract $\nu\tau^2\Phi_{n+1}$ to the right-hand side of \eref{eq:ladder-grad} to obtain the negative definite term in \eref{eq:ladder-force-1}.  The remaining two terms of the $H_n$ inequality are expressed in terms of $K_n$ via the obvious bounds $H_n\leqslant K_n$ and $\Phi_n \leqslant \tau^{-2}K_n$. Finally, we are left with the term $\nu\tau^2\Phi_{n+1}$, which is  estimated by using $\tau^2\Phi_{n+1}\leqslant \Phi_{n+1}K_n/\Phi_n\leqslant \kmax^2 K_n$.

Inequality \eref{eq:ladder-force-2} is proved in exactly the same manner.
\end{proof}

\subsection{A bound for the time average $\lgl H_n^{\frac{2}{n+1}}\rgl$}
\label{sect:FGT}

We now make use of the first ladder inequality and the estimate for $\epsilon$ to prove the shell-model analogue of a Navier--Stokes result of Foias, Guillop\'e and Temam \cite{fgt}. It ought to be noted that the exponent of $H_n$ in the bound below is greater than that found for the $3D$ NSEs. The reason for this difference between the shell model and the $3D$ NSEs is discussed in Sect.~\ref{sect:relax}.
\begin{thm}\label{thm:FGT}
Let $n\geqslant 1$ and $E(0)<\infty$ and assume that the forcing $(f_1,f_2,\dots)$ has a maximum wavenumber $\kmax$ and is such that
$\Phi_n<\infty$. Then, for $\Rey\gg 1,$
\bel
\lgl H_n^{\frac{2}{n+1}}\rgl \leqslant \hat{c}_n \, \nu^\frac{4}{n+1} k_f^4\, \Rey^3  + \OT \,,
\label{eq:FGT}
\ee
where the dimensionless positive constant $\hat{c}_n$ depends on $a$, $b$, $\lambda$, $n$ but is uniform in $\nu$, $k_0$, $k_f$, $\kmax$, $\scrF$.
\end{thm}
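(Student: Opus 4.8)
The plan is to adapt the Foias–Guillopé–Temam strategy as presented for the NSEs in \cite{fgt,dg95}, working with the $K_n$ variables so that the forcing is absorbed into the dynamics. Starting from the first ladder inequality \eref{eq:ladder-force-1}, the key is to control the nonlinear term $c_n K_n \supj k_j|u_j|$. I would bound $\supj k_j|u_j| \leqslant H_1^{1/2}$ (since $k_j^2|u_j|^2 \leqslant \sum_l k_l^2|u_l|^2 = H_1$), and then recall from \eref{eq:epsilon-H_1} that $\nu\lgl H_1\rgl = \epsilon$ is controlled by Theorem~\ref{thm:epsilon}. This links the nonlinear term directly to the energy-dissipation estimate. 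The remaining linear term $\nu(k_0^2 + \kmax^2) K_n$ is harmless because it is bounded by a constant times $\nu k_f^2 K_n$ (up to the fixed ratio $\kmax/k_f$), hence of lower order relative to the dissipative term $\nu K_{n+1}$ once we use the interpolation \eref{eq:triangle-ineq}.

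The heart of the argument is the interpolation step. Using $K_n \leqslant K_0^{1/(n+1)} K_{n+1}^{n/(n+1)}$ (the $K$-analogue of \eref{eq:triangle-ineq}, which follows from the same Cauchy–Schwarz/Hölder argument in the Appendix), I would rewrite \eref{eq:ladder-force-1} so that the negative term $-\nu K_{n+1}$ dominates $K_n$ raised to an appropriate power. Concretely, from $\dot K_n \leqslant -\nu K_{n+1} + (\text{positive terms}) K_n$ and $K_{n+1} \geqslant K_n^{(n+1)/n} K_0^{-1/n}$, one obtains a differential inequality of the form $\dot K_n \leqslant -\nu K_0^{-1/n} K_n^{1+1/n} + (\text{lower order}) K_n$. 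The standard trick is then to divide through and integrate in time: the $K_n^{1+1/n}$ term, when time-averaged after an integration by parts that kills the $\dot K_n$ contribution up to $\OT$ boundary terms, yields $\lgl K_n^{(n+1)/n} K_0^{-1/n} \rgl$ bounded by time averages of the right-hand side. Applying Hölder to convert $\lgl K_n^{(n+1)/n} K_0^{-1/n}\rgl$ into a bound on $\lgl K_n^{2/(n+1)}\rgl$ (using that $K_0$ has a bounded $\limsup$ from \eref{eq:H0}), and feeding in $\lgl H_1 \rgl \leqslant \epsilon/\nu$ together with the Reynolds-number scaling $\epsilon \leqslant \nu^3 k_f^4(c_1\Rey^2 + c_2\Rey^3)$, produces the exponent $\Rey^3$ and the prefactor $\nu^{4/(n+1)} k_f^4$ after tracking the powers of $\nu$ and $k_f$. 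Finally $H_n \leqslant K_n$ gives the claimed bound for $H_n$ itself.

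The main obstacle I anticipate is the bookkeeping in the interpolation/Hölder step: one must choose the Hölder exponents so that the power of $K_n$ that appears naturally from integrating $\dot K_n/K_n^{\text{something}}$ matches the target power $2/(n+1)$, while simultaneously ensuring the $K_0$ factors and the $\supj k_j|u_j| \leqslant H_1^{1/2}$ factor combine to give exactly one power of $\lgl H_1\rgl$ (so that the $\Rey^3$ from $\epsilon$ enters linearly, not with a fractional power). Getting the dimensional prefactor $\nu^{4/(n+1)} k_f^4$ right is a consistency check: the exponent $4/(n+1)$ on $\nu$ is precisely what \eref{eq:alphasmdef} predicts via $\alpha_{n,1} = 4/(n+1)$, so any slip in the exponent arithmetic will be visible. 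A secondary technical point is justifying that all boundary terms from the time integration are genuinely $\OT$; this uses the absorbing-ball Corollary, which guarantees $K_n(T)$ stays bounded, so $K_n(T)/T$ and $K_n(0)/T$ are both $\OT$ as required. Everything else — the Cauchy–Schwarz bounds on the nonlinear and forcing terms, the handling of the linear $\nu(k_0^2+\kmax^2)K_n$ term — is routine and parallels the NSE case in \cite{dg95}.
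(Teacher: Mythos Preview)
Your opening moves are correct and match the paper: start from the first ladder inequality \eref{eq:ladder-force-1}, bound $\supj k_j|u_j|\leqslant H_1^{1/2}$, and absorb the linear term into $\widehat H_1^{1/2}=H_1^{1/2}+2\nu\kmax^2$. The divergence from the paper---and the genuine gap---is the interpolation step. Replacing $K_{n+1}$ by $K_n^{(n+1)/n}K_0^{-1/n}$ and then feeding in the absorbing-ball bound on $K_0$ does \emph{not} yield $\Rey^3$. The only available upper bound on $K_0$ is \eref{eq:H0}, which scales as $\Gr^2\sim\Rey^4$ by \eref{eq:Gr-Re}. If you divide by $K_n^{n/(n+1)}$, time-average, lower-bound $K_0^{-1/n}$ via $K_0\leqslant M\sim\nu^2(k_f/k_1)^4k_f^2\Rey^4$, and use $\lgl H_1\rgl\leqslant\nu^2k_f^4\Rey^3$, the arithmetic gives $\lgl K_n^{2/(n+1)}\rgl\lesssim\nu^{4/(n+1)}k_f^4\Rey^{(3n+8)/(n+1)}$, with a constant carrying $(k_f/k_1)^{8/(n+1)}$. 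The exponent exceeds $3$ for every $n\geqslant1$ and the constant is not uniform in $k_0,\,k_f$, so the theorem as stated is not recovered. If instead you time-average without first dividing, the right-hand side carries $\lgl\widehat H_1^{1/2}K_n\rgl$, a first-power moment of $K_n$ that cannot be bounded by the lower moment $\lgl K_n^{2/(n+1)}\rgl$; the argument does not close. This is not a bookkeeping issue but a structural one: any route through $K_0$ imports the $\Gr^2$ scaling of the energy absorbing ball.

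The paper's proof avoids this by never interpolating down to $K_0$. After dividing by $K_n^{n/(n+1)}$ and time-averaging, it keeps the ratio $K_{n+1}/K_n^{n/(n+1)}$ intact on the left and then writes
\[
\lgl K_{n+1}^{2/(n+2)}\rgl=\lgl\Big(\frac{K_{n+1}}{K_n^{n/(n+1)}}\Big)^{2/(n+2)}K_n^{\frac{2n}{(n+1)(n+2)}}\rgl\leqslant\lgl\frac{K_{n+1}}{K_n^{n/(n+1)}}\rgl^{\frac{2}{n+2}}\lgl K_n^{\frac{2}{n+1}}\rgl^{\frac{n}{n+2}},
\]
producing a recursion $A_{n+1}\leqslant c'_n A_n^{(n+1)/(n+2)}A_1^{1/(n+2)}$ with $A_n=\nu^{-4/(n+1)}k_f^{-4}\lgl K_n^{2/(n+1)}\rgl$. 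Young's inequality linearises this to $A_{n+1}\leqslant c(A_n+A_1)$, and since the only $\Rey$-dependent input is $A_1\sim\lgl H_1\rgl\leqslant\hat c_1\Rey^3$, every $A_n$ inherits $\Rey^3$ by induction on $n$. The essential idea you are missing is that the Foias--Guillop\'e--Temam mechanism here is \emph{iterative in $n$}, stepping from $K_n$ to $K_{n+1}$, rather than direct via interpolation against the energy.
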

\begin{proof}
By noting that
\begin{equation}
\supj k_j \vert u_j\vert = \left(\supj k_j^2 \vert u_j\vert^2\right)^{1/2} \leqslant H_1^{1/2}\,,
\end{equation}
we turn \eref{eq:ladder-force-1} into
\begin{equation}
\label{eq:ladder-thm-H_n}
\frac{1}{2}\,\dot{K}_n \leqslant -\nu K_{n+1} +c_n \widehat{H}_1^{1/2} K_n \,,
\end{equation}
where we have denoted $\widehat{H}_1^{1/2} = H_1^{1/2} + 2\nu\kmax^2$ and have used $k_0<\kmax$.
We shall see that the additive constant in $\widehat{H}_1^{1/2}$ gives a negligible contribution at large $\Rey$.

We then divide \eref{eq:ladder-thm-H_n} by $K_n^\frac{n}{n+1}$ and time average.
The time-derivative term can be simplified as follows\,:
\bel
\fl
\lgl K_n^{-\frac{n}{n+1}}\dot{K}_n\rgl = (n+1) \lgl \frac{\rmd}{\rmd t} K_n^{\frac{1}{n+1}}\rgl = 
\frac{n+1}{T}\left[K_n^{\frac{1}{n+1}}(T) - K_n^{\frac{1}{n+1}}(0)\right] \,.
\ee
The first term on the right-hand side is bounded below by $(n+1)\left(\tau^2\Phi_n\right)^\frac{1}{n+1}/T>0$, while the second 
one is $\OT$. We are therefore left with
\bel
\fl
\lgl \frac{K_{n+1}}{K_n^\frac{n}{n+1}}\rgl
\leqslant
\frac{c_n}{\nu}\lgl K_{n}^{\frac{1}{n+1}}\widehat{H}_1^{\frac{1}{2}}\rgl + \OT
\leqslant
\frac{c_n}{\nu}\lgl K_{n}^{\frac{2}{n+1}}\rgl^{\frac{1}{2}}
\lgl \widehat{H}_1\rgl^{\frac{1}{2}} + \OT \,.
\label{eq:average-ratio}
\end{equation}
We now estimate the time average of $K_{n+1}^{\frac{2}{n+2}}$ by using \eref{eq:average-ratio}
and H\"older's inequality\,:
\begin{eqnarray}
\fl
\lgl K_{n+1}^{\frac{2}{n+2}}\rgl & = &
\lgl \left(\frac{K_{n+1}}{K_n^\frac{n}{n+1}}\right)^{\frac{2}{n+2}}
K_n^\frac{2n}{(n+1)(n+2)}\rgl
\leqslant
\lgl \frac{K_{n+1}}{K_n^\frac{n}{n+1}}\rgl^{\frac{2}{n+2}}
\lgl K_n^\frac{2}{(n+1)}\rgl^\frac{n}{n+2}
\\
&\leqslant&
c'_n \nu^{-\frac{2}{n+2}}
\lgl K_n^\frac{2}{(n+1)}\rgl^\frac{n+1}{n+2}
\lgl \widehat{H}_1 \rgl^\frac{1}{n+2} + \OT
\label{eq:bound-Hn1}
\end{eqnarray}
with $c'_n = c_n^{\frac{2}{n+2}}$.
Define now the dimensionless quantities
\begin{equation}
{A}_1= \nu^{-2} k_f^{-4}\lgl \widehat{H}_1\rgl
\qquad \mathrm{and}\qquad
A_n= \nu^{-\frac{4}{n+1}} k_f^{-4} \lgl K_n^\frac{2}{(n+1)}\rgl
\end{equation}
for $n\geqslant 2$.
The bound in \eref{eq:bound-Hn1} then takes the form
\begin{equation}
\label{eq:bound-A}
\lgl A_{n+1}\rgl \leqslant c'_n  \lgl A_n\rgl^\frac{n+1}{n+2} \lgl {A}_1 \rgl^\frac{1}{n+2} +\OT
\end{equation}
and, after the use of Young's inequality,
\begin{equation}
\label{eq:Young}
\lgl A_{n+1}\rgl \leqslant \frac{c'_n(n+1)}{n+2} \lgl A_n \rgl + \frac{c'_n}{n+2} \lgl {A}_1 \rgl +\OT\,.
\end{equation}
To estimate ${A}_1$, we invoke Jensen's inequality, \eref{eq:epsilon-H_1}, and Theorem~\ref{thm:epsilon} for $\Rey\gg 1$\,:
\bel
\fl
\lgl {A}_{1}\rgl \leqslant \nu^{-2}k_f^{-4} \lgl H_1\rgl + 4 \nu^{-1}k_f^{-4}\kmax^2 \lgl H_1\rgl^{1/2} + 4k_f^{-4}\kmax^4
\leqslant \hat{c}_1\, \Rey^3 + \OT\,.
\label{eq:bound-A1}
\ee
Here $\hat{c}_1$ is a dimensionless constant that depends on $a$, $b$, $\lambda$ and is uniform in $\nu$, $k_0$, $k_f$, $\kmax$, $\scrF$.
We now use \eref{eq:bound-A1} in \eref{eq:Young} for $n= 1$ to estimate $\lgl A_2\rgl$ and then proceed iteratively to find
\begin{equation}
\lgl A_{n}\rgl
\leqslant \hat{c}_{n} \Rey^3 + \OT\, .
\end{equation}
We obtain the final result by writing the latter bound in dimensional form and recalling that $H_n\leqslant K_n$.
\end{proof}

\subsection{High-order moments of the velocity derivatives}\label{sect:HNM}

For Navier--Stokes flows, the deviations of the velocity and its derivatives from their mean values are captured by the norms $\Vert\bnabla^n\bi{u}\Vert_{L^{2m}}$, 
where $0\leqslant n$ and $1\leqslant m\leqslant \infty$ \cite{gibbon20}.
For $m<\infty$, the shell-model analogues of $\Vert\bnabla^n\bi{u}\Vert^{2m}_{L^{2m}}$ are
\bel
H_{n,m} = \Sum{j} k_j^{2nm} \vert u_j^\phexp\vert^{2m} \,,
\ee
which reduce to $H_n$ when $m=1$.
Instead, the analogue of $\Vert\bnabla^n\bi{u}\Vert_{L^\infty}$ is $\supj k_j^n \vert u_j^\phexp \vert$.

By building on the results of the previous sections, we can generalize Theorem~\ref{thm:FGT} to $H_{n,m}$. Note once again that the exponent of $H_{n,m}$ in the time average differs
from that found for weak solutions of the $3D$ NSEs \cite{gibbon20}.
\begin{thm}\label{thm:HNM}
Under the same assumptions as in Theorem~\ref{thm:FGT} and for $\Rey\gg 1$, $H_{n,m}$ satisfies
\begin{equation}
\lgl H_{n,m}^{\frac{2}{m(n+1)}}\rgl \leqslant \hat{c}_{n}\, \nu^{\frac{4}{(n+1)}} k_f^4 \, \Rey^3 + \OT
\label{eq:HNM}
\end{equation}
if $1\leqslant n$, $1\leqslant m < \infty$, and
\begin{equation}
\lgl H_{0,m}^{\frac{1}{m}}\rgl \leqslant \nu^2 k_f^2 \, \Rey^2
\label{eq:H0M}
\end{equation}
if $n=0$ and $1\leqslant m < \infty$. In addition, for $n\geqslant 1$
\bel\label{eq:infty-norm}
\lgl \Big(\supj k_j^n \vert u_j^\phexp \vert\Big)^\frac{4}{n+1} \rgl \leqslant \hat{c}_{n}\, \nu^{\frac{4}{(n+1)}} k_f^4 \, \Rey^3 + \OT\,,
\ee
while for $n=0$
\bel
\lgl \Big(\supj \vert u_j^\phexp \vert\Big)^{2} \rgl \leqslant \nu^2 k_f^2 \, \Rey^2.
\label{eq:infty-norm-n_0}
\ee
The constants $\hat{c}_{n}$ depend on $a$, $b$, $\lambda$, $n$, but are uniform in $\nu$, $k_0$, $k_f$, $\kmax$, $\scrF$.
\end{thm}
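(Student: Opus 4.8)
The plan is to bootstrap from Theorem~\ref{thm:FGT} (the $m=1$ case) using two essentially algebraic observations about the shell-model sums. First I would note the elementary embedding inequality
\[
H_{n,m} = \Sum{j} k_j^{2nm}\vert u_j\vert^{2m} \leqslant \Big(\supj k_j^{2(n-1)}\vert u_j\vert^{2}\Big)^{m-1}\Sum{j} k_j^{2nm-2(n-1)(m-1)}\vert u_j\vert^{2}
\leqslant \Big(\supj k_j^{n}\vert u_j\vert\Big)^{2(m-1)} H_{n+1} \,,
\]
obtained by pulling out a power of the $\ell^\infty$-type quantity and bounding the remaining $\ell^2$-sum by a single $H$; the wave-number bookkeeping $k_j^{n}\leqslant k_j^{n+1}$ then leaves $H_{n+1}$. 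Taking the $\frac{1}{m(n+1)}$ power and time-averaging, a Hölder split separates $\langle(\supj k_j^n\vert u_j\vert)^{4/(n+1)}\rangle$ from $\langle H_{n+1}^{2/(n+2)}\rangle$; the latter is controlled by Theorem~\ref{thm:FGT} applied at level $n+1$, so the whole problem reduces to proving the $L^\infty$-type estimate \eref{eq:infty-norm}.

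For \eref{eq:infty-norm} I would use the trivial chain $\supj k_j^n\vert u_j\vert \leqslant \big(\supj k_j^{2n}\vert u_j\vert^2\big)^{1/2} \leqslant \big(\Sum{j}k_j^{2n}\vert u_j\vert^2\big)^{1/2} = H_n^{1/2}$, so that $\big(\supj k_j^n\vert u_j\vert\big)^{4/(n+1)} \leqslant H_n^{2/(n+1)}$, and the right-hand side is exactly what Theorem~\ref{thm:FGT} bounds by $\hat c_n\,\nu^{4/(n+1)}k_f^4\Rey^3 + \OT$. The same chain with $n=0$ gives $\supj\vert u_j\vert \leqslant H_0^{1/2} = E^{1/2}$, and since $\langle E\rangle = U^2 = \nu^2 k_f^2\Rey^2$ by the definition \eref{Reydef}, both \eref{eq:infty-norm-n_0} and, via the analogous embedding with $n=0$, the $H_{0,m}$ bound \eref{eq:H0M} follow — here one uses $H_{0,m}\leqslant(\supj\vert u_j\vert)^{2(m-1)}H_0$ and $\langle H_{0,m}^{1/m}\rangle \leqslant \langle\,(\supj\vert u_j\vert)^{2}\,\rangle^{(m-1)/m}\langle H_0\rangle^{1/m} \leqslant \nu^2 k_f^2\Rey^2$.

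Assembling \eref{eq:HNM}: from the first displayed inequality, $H_{n,m}^{2/(m(n+1))} \leqslant \big(\supj k_j^n\vert u_j\vert\big)^{4(m-1)/(m(n+1))} H_{n+1}^{2/(m(n+1))}$, and applying Hölder with conjugate exponents $\frac{m}{m-1}$ and $m$ to the time average gives
\[
\lgl H_{n,m}^{\frac{2}{m(n+1)}}\rgl \leqslant \lgl \big(\supj k_j^n\vert u_j\vert\big)^{\frac{4}{n+1}}\rgl^{\frac{m-1}{m}} \lgl H_{n+1}^{\frac{2}{n+1}}\rgl^{\frac{1}{m}} \,.
\]
Each factor is $\hat c_n\,\nu^{4/(n+1)}k_f^4\Rey^3 + \OT$ — the first by \eref{eq:infty-norm}, the second by Theorem~\ref{thm:FGT} with $n\to n+1$, noting $\frac{2}{n+2}\leqslant\frac{2}{n+1}$ so a further Jensen/Hölder step passes between the two exponents at the cost of a constant — and the product carries the same $\Rey^3$ scaling with a renamed $\hat c_n$. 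The main obstacle is purely cosmetic: keeping the exponents of $\nu$, $k_f$, $\lambda$ consistent across the Hölder splits and the exponent change from $\frac{2}{n+2}$ to $\frac{2}{n+1}$, and verifying that the finiteness hypothesis $\Phi_{n+1}<\infty$ needed to invoke Theorem~\ref{thm:FGT} at level $n+1$ is guaranteed by the cutoff assumption on the forcing (it is, since $\kmax<\infty$ makes every $\Phi_k$ finite). No new analytic input beyond the $m=1$ theorem and the energy bound is required.
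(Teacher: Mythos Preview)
Your treatment of the $L^\infty$-type bounds \eref{eq:infty-norm}, \eref{eq:infty-norm-n_0} and of the $n=0$ case \eref{eq:H0M} is correct and essentially matches the paper. The problem is in the main step, the embedding you use for $H_{n,m}$ with $n\geqslant 1$ and $m>1$.

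Your displayed inequality $H_{n,m}\leqslant\big(\sup_j k_j^{n}|u_j|\big)^{2(m-1)}H_{n+1}$ is false. If you pull out $\big(k_j^{2(n-1)}|u_j|^2\big)^{m-1}$ as you wrote, the remaining $k_j$-exponent is $2nm-2(n-1)(m-1)=2(n+m-1)$, so the leftover sum is $H_{n+m-1}$, not $H_{n+1}$; and the sup factor is $\big(\sup_j k_j^{n-1}|u_j|\big)^{2(m-1)}$, not with exponent $n$. Neither discrepancy can be repaired by ``$k_j^{n}\leqslant k_j^{n+1}$'': that inequality is dimensionally meaningless (and numerically false when $k_j<1$, as in the paper's own simulations with $k_0=2^{-4}$), and in any case $H_{n+m-1}$ is \emph{larger} than $H_{n+1}$ for $m>2$ whenever the $k_j$ exceed $1$. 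The second error compounds the first: even if you had reached $\big\langle H_{n+1}^{2/(n+1)}\big\rangle_T$, you cannot bound it from Theorem~\ref{thm:FGT} at level $n+1$, which controls only $\big\langle H_{n+1}^{2/(n+2)}\big\rangle_T$; Jensen goes the wrong way here, since $2/(n+1)>2/(n+2)$.

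Both errors vanish if you pull out the correct factor $\big(k_j^{2n}|u_j|^2\big)^{m-1}$, which gives the clean bound $H_{n,m}\leqslant\big(\sup_j k_j^{n}|u_j|\big)^{2(m-1)}H_n$; then your H\"older split yields $\big\langle(\sup_j k_j^n|u_j|)^{4/(n+1)}\big\rangle_T^{(m-1)/m}\big\langle H_n^{2/(n+1)}\big\rangle_T^{1/m}$, and both factors are directly controlled by \eref{eq:infty-norm} and Theorem~\ref{thm:FGT} at level $n$, with no Jensen step needed. The paper takes an even shorter route: it uses the elementary sequence inequality $\sum_j X_j^m\leqslant\big(\sum_j X_j\big)^m$ for $X_j\geqslant 0$, $m\geqslant 1$, applied with $X_j=k_j^{2n}|u_j|^2$, to get $H_{n,m}\leqslant H_n^{m}$ pointwise in time. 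Raising to the power $2/(m(n+1))$ gives $H_{n,m}^{2/(m(n+1))}\leqslant H_n^{2/(n+1)}$, and a single time average plus Theorem~\ref{thm:FGT} finishes the proof with no H\"older splitting at all.
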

\begin{proof}
The case $m=1$ was proved in Theorem~\ref{thm:FGT}.
For $1<m<\infty$, we use the inequality
\bel
\Sum{j} X_j \leqslant \left(\Sum{j} X_j^{1/p}\right)^p\,,
\label{eq:ineq-power}
\ee
where $p\geqslant 1$ and $X_j\geqslant 0$ for all $j$. When applied to $H_{n,m}$, this inequality yields
\bel
H_{n,m} \leqslant H_{n}^{m}\,.
\label{eq:LHNM-HN}
\ee
If $n\geqslant 1$, the result follows from raising both sides of \eref{eq:LHNM-HN} to the power $2/m(n+1)$ and invoking Theorem~\ref{thm:FGT}. For $n=0$, it is proved by raising both sides of \eref{eq:LHNM-HN} to the power $1/m$ and using $H_0=\nu^2 k_f^2 \Rey$.

Finally, \eref{eq:infty-norm} is proved by noting that
\bel
\left(\supj k_j^n \vert u_j^\phexp \vert\right)^\frac{4}{n+1} = \left(\supj k_j^{2n} \vert u_j^\phexp \vert^2\right)^\frac{2}{n+1}
\leqslant H_n^\frac{2}{n+1}
\ee
and using Theorem~\ref{thm:FGT}, while \eref{eq:infty-norm-n_0} follows from $\supj \vert u_j\vert^2\leqslant H_0$.
\end{proof}

\section{Comparison with the velocity gradient averaged Navier--Stokes equations}\label{sect:relax}

The issue in this section concerns how the velocity derivative estimates displayed in Theorem \ref{thm:HNM} compare with those for the NSEs. It is not clear that there necessarily {\em should} be a positive comparison, given that the $3D$ NSEs are not known to be regular and their corresponding scaling exponents
defined in \eref{Fnmdef} and \eref{Fnmest1} are different, namely\,:
\bel\label{comp1}
\alpha_{n,m} = \frac{2m}{2m(n+1)-3}~~\mbox{(NSE)}\qquad\quad \alpha_{n,m} = \frac{4}{n+1}~~\mbox{(Shell)}\,.
\ee
As we will now show, the real comparison lies with what we have called the ``velocity gradient averaged Navier--Stokes equations'' (VGA-NSEs). To explain the origin of this name, let us return to the first ladder inequality for $H_{n}$ displayed in (\ref{eq:ladder-grad}), which for the NSEs is written in the form\footnote{In this section, $c_n$ is a generic positive constant dependent on $n$.}
\bel\label{eq:ladder-grad-ex}
\frac{1}{2} \dot{H}_{n} \leqslant - \nu H_{n+1} +  c_{n}\|\bnabla\bu\|_{\infty}H_{n}\,,
\ee
where for simplicity, we have ignored the forcing term \cite{bdg91,dg95}. 
As explained in (\ref{im1}) in \S\ref{sect:introd}, the approximation where the $L^{\infty}$-norm is replaced by its spatial average
\bel\label{rnse1a}
\|\bnabla\bu\|_{\infty}\leqslant c_{n} L^{-3/2}\|\bnabla\bu\|_{2}\,.
\ee
has the effect of suppressing intermittent events in $\bnabla\bu$.
Thus we are not dealing with a modified PDE but with an averaging of its solutions reflected in the behaviour of $\|\bnabla\bu\|_{\infty}$. In terms of the $H_{n}$-ladder we are dealing with 
\bel\label{rnse2}
\frac{1}{2} \dot{H}_{n} \leqslant - \nu H_{n+1} +  c_{n} L^{-3/2}\|\bnabla\bu\|_{2}H_{n}
\ee
which yields the exact equivalent of Theorem \ref{thm:FGT}\,:
\begin{thm}\label{FGT-RNSE}
For $n \geqslant 1$, the $H_{n}$ for the $3D$ VGA-NSEs obey the bounds
\bel\label{rnse3}
\left< H_{n}^{\frac{2}{n+1}}\right>_{T} \leqslant c_{n} L^{-4}\nu^{\frac{4}{n+1}}\,\Rey^{3}\,.
\ee
\end{thm}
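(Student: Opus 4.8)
The plan is to mirror, line for line, the proof of Theorem~\ref{thm:FGT} for the shell model, since the VGA-NSE ladder inequality \eref{rnse2} is the exact PDE analogue of the shell-model inequality \eref{eq:ladder-force-1} once one substitutes $\supj k_j|u_j|\leqslant H_1^{1/2}$ and identifies the role of $L^{-3/2}$ with the dimensional factors produced by the logarithmic wavenumber spacing. First I would divide \eref{rnse2} by $H_n^{n/(n+1)}$ and time-average; the time-derivative term collapses to $(n+1)T^{-1}[H_n^{1/(n+1)}(T)-H_n^{1/(n+1)}(0)]$, whose first piece is nonnegative and second piece is $\Or(T^{-1})$, leaving a bound on $\langle H_{n+1}H_n^{-n/(n+1)}\rangle_T$ in terms of $L^{-3/2}\langle H_n^{1/(n+1)}\|\bnabla\bu\|_2\rangle_T$, which a Cauchy--Schwarz step converts to $L^{-3/2}\langle H_n^{2/(n+1)}\rangle_T^{1/2}\langle H_1\rangle_T^{1/2}$ (using $\|\bnabla\bu\|_2^2 = H_1$).

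Next I would reproduce the H\"older manipulation of \eref{eq:bound-Hn1}: writing $H_{n+1}^{2/(n+2)} = (H_{n+1}H_n^{-n/(n+1)})^{2/(n+2)}(H_n^{2/(n+1)})^{n/[(n+1)(n+2)]}$ and applying H\"older with exponents $(n+2)/2$ and $(n+2)/n$ gives
\bel\label{rnse-holder}
\left< H_{n+1}^{\frac{2}{n+2}}\right>_{T}\leqslant c_n L^{-\frac{6}{n+2}}\nu^{-\frac{2}{n+2}}\left< H_n^{\frac{2}{n+1}}\right>_{T}^{\frac{n+1}{n+2}}\left< H_1\right>_{T}^{\frac{1}{n+2}}\,.
\ee
Introducing the dimensionless quantities $A_1 = L^4\nu^{-2}\langle H_1\rangle_T$ and $A_n = L^4\nu^{-4/(n+1)}\langle H_n^{2/(n+1)}\rangle_T$ for $n\geqslant 2$ (the powers of $L$ and $\nu$ chosen so that \eref{rnse-holder} becomes scale-invariant), \eref{rnse-holder} reads $A_{n+1}\leqslant c_n A_n^{(n+1)/(n+2)}A_1^{1/(n+2)}$, and Young's inequality turns this into the linear recursion $A_{n+1}\leqslant \tfrac{c_n(n+1)}{n+2}A_n + \tfrac{c_n}{n+2}A_1$. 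Finally, the $m=n=1$ energy-dissipation estimate for the NSEs --- the Doering--Foias bound $\langle H_1\rangle_T = \nu^{-1}\epsilon \leqslant c\,\nu L^{-4}\Rey^3$ at high $\Rey$ --- gives $A_1\leqslant c\,\Rey^3$, and the recursion propagates this to $A_n\leqslant c_n\Rey^3$ for all $n\geqslant 1$; rewriting in dimensional form yields \eref{rnse3}.

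The only step requiring genuine care, rather than transcription, is the bookkeeping of the dimensional factors: in the shell-model version the powers of $k_f$ and $\nu$ are dictated by Lemma~\ref{lemma} and the definitions \eref{Reydef}, whereas here the factor $L^{-3/2}$ in \eref{rnse1a} must be tracked through each H\"older and Young step to confirm that the $A_n$ are genuinely dimensionless and that the final exponent of $\Rey$ is exactly $3$, independent of $n$ --- this is precisely the point of the theorem, namely that the VGA-NSE exponent $\alpha_{n,m}=4/(n+1)$ coincides with the shell-model one in \eref{comp1} rather than with the true NSE exponent $2m/[2m(n+1)-3]$. I would therefore state the VGA-NSE ladder \eref{rnse2} explicitly with its constant, verify the base case $A_1\lesssim\Rey^3$ from the Doering--Foias estimate, and then note that every subsequent inequality is formally identical to \eref{eq:average-ratio}--\eref{eq:Young} in the proof of Theorem~\ref{thm:FGT}, so that the iteration and hence the conclusion carry over verbatim. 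A minor caveat worth a sentence: since regularity of the $3D$ NSEs is open, \eref{rnse2} should be read as an \emph{a priori} inequality for sufficiently smooth solutions (or for the averaged surrogate described in \S\ref{sect:relax}), so the theorem is a statement about the VGA-NSEs as defined, not about Leray--Hopf weak solutions of the genuine NSEs.
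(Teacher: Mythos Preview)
Your proposal is correct and follows essentially the same route as the paper's own proof: divide the VGA-NSE ladder \eref{rnse2} by $H_n^{n/(n+1)}$, time-average, apply Cauchy--Schwarz and then the H\"older/Young recursion on dimensionless quantities (the paper calls them $X_n$ and simply suppresses the factors of $L$ and $\nu$), closing with the Doering--Foias bound $\langle H_1\rangle_T\lesssim\Rey^3$. One harmless slip: in your H\"older decomposition the outer exponent on $H_n^{2/(n+1)}$ should be $n/(n+2)$ rather than $n/[(n+1)(n+2)]$ --- your stated H\"older pair $(n+2)/2,\,(n+2)/n$ and the resulting inequality are nonetheless correct, so the argument goes through.
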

\begin{remark}
Bounds for $H_{n,m}$ follow in the same manner as in Theorem \ref{thm:HNM}, as can be easily seen by using approximation \eref{rnse1} in the proof of Theorem~1 of \cite{gibbon19}. The relaxation of the $L^{\infty}$ to the $L^{2}$-norm in \eref{rnse1} accounts for the insensitivity of the exponents to the value of $m$.
\end{remark}
\begin{proof}
To mimic the FGT-analysis of Theorem \ref{thm:FGT}, and suppressing the multiplicative factors of $L$ and $\nu$, we divide (\ref{rnse2}) by $H_{n}^{\frac{n}{n+1}}$ to obtain
\beq{nse5}
\left<\frac{H_{n+1}}{H_{n}^{\frac{n}{n+1}}}\right>_{T} &\leqslant& \left<H_{1}^{1/2}H_{n}^{\frac{1}{n+1}}\right>_{T}\nonumber\\
&\leqslant& \left<H_{1}\right>_{T}^{1/2}\left<H_{n}^{\frac{2}{n+1}}\right>_{T}^{1/2}
\eeq
Moreover, 
\beq{nse6}
\left<H_{n+1}^{\frac{2}{n+2}}\right>_{T} &=& \left<\left(\frac{H_{n+1}}{H_{n}^{\frac{n}{n+1}}}\right)^{\frac{2}{n+2}}
H_{n}^{\frac{2n}{(n+1)(n+2)}}\right>_{T}\nonumber\\
&\leqslant& \left<\frac{H_{n+1}}{H_{n}^{\frac{n}{n+1}}}\right>_{T}^{\frac{2}{n+2}}
\left<H_{n}^{\frac{2}{n+1}}\right>_{T}^{\frac{n}{n+2}}
\eeq
Let 
\bel\label{Xndef}
X_{n} = 
\left<H_{n}^{\frac{2}{n+1}}\right>_{T}
\ee
then from (\ref{nse6}) and (\ref{nse5}) we have 
\beq{nse7}
X_{n+1} &\leqslant& \frac{2}{n+2}\left<\frac{H_{n+1}}{H_{n}^{\frac{n}{n+1}}}\right>_{T} + \frac{n}{n+2}X_{n}\nonumber\\
&\leqslant&  \frac{1}{n+2}\left(\left<H_{1}\right>_{T} + X_{n}\right) + \frac{n}{n+2}X_{n}\nonumber\\
&=&  \frac{1}{n+2}\left<H_{1}\right>_{T} + \frac{n+1}{n+2}X_{n}
\eeq
Since $X_{1} = \left<H_{1}\right>_{T} \leqslant Re^{3}$ we have estimates for every $n \geqslant 1$ in the form of (\ref{rnse3}).
\end{proof}
\noindent
Theorem \ref{FGT-RNSE} holds for $n \geqslant 1$. What of the velocity field represented by $n=0$?  
\begin{lemma}
For  $3 < m \leqslant \infty$, the velocity field for the $3D$ VGA-NSEs obey the bounds 
\bel\label{u2}
\left<\|\bu\|_{2m}^{2}\right>_{T} \leqslant c_{m}\Rey^{3}\,.
\ee
\end{lemma}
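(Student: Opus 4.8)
The plan is to derive \eref{u2} from Theorem \ref{FGT-RNSE} by the usual Sobolev-embedding argument adapted to the velocity field itself, using the $n=0$ analogue of the VGA-ladder. First I would write the VGA-ladder inequality at the $n=0$ rung, which (dropping the forcing and the factors of $L$ and $\nu$) reads
\bel\label{nse-n0}
\frac{1}{2}\dot{H}_{0} \leqslant -\nu H_{1} + c\, L^{-3/2}\|\bnabla\bu\|_{2} H_{0} = -\nu H_{1} + c\, L^{-3/2} H_{1}^{1/2} H_{0}\,.
\ee
Dividing by $H_{0}$ and time-averaging, the left-hand side contributes an $\Or(T^{-1})$ term, so one obtains $\nu\langle H_{1}/H_{0}\rangle_{T} \leqslant c\, L^{-3/2}\langle H_{1}^{1/2}\rangle_{T} + \OT$. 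Since $\langle H_{1}\rangle_{T} \leqslant \nu^{3}L^{-1}\Rey^{3}$ (the energy-dissipation bound, $m=n=1$), this controls $\langle H_{0}\rangle_{T}$ as well through a Poincar\'e step $H_{1}\geqslant L^{-2}H_{0}$, giving $\langle H_{0}\rangle_{T}\leqslant c\,\nu^{2}L^{-2}\Rey^{3}$ (a cruder estimate in $\Rey$ than \eref{rel2} but sufficient to seed the iteration).

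The key step is then the Gagliardo--Nirenberg interpolation that converts the $L^{2}$ control of $\bu$ and $\bnabla\bu$ into control of $\|\bu\|_{2m}$. For $3 < m \leqslant \infty$ and $\bu$ divergence-free on the periodic cube, one has
\bel\label{GN-u}
\|\bu\|_{2m} \leqslant c_{m}\, \|\bu\|_{2}^{1-\theta}\,\|\bnabla\bu\|_{2}^{\theta}\,,\qquad \theta = \frac{3}{2}-\frac{3}{2m}\in\left(1,\tfrac{3}{2}\right]\,,
\ee
(with the $m=\infty$ endpoint handled by the standard $\|\bu\|_{\infty}\leqslant c\|\bu\|_{2}^{1/4}\|\bnabla\bu\|_{2}^{3/4}$-type estimate in $3D$, after one extra derivative if needed — but since \eref{GN-u} only requires $\theta\leqslant 3/2$, the plain $H^{1}$ version suffices for all $m$ in the stated range). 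Squaring, $\|\bu\|_{2m}^{2}\leqslant c_{m} H_{0}^{1-\theta} H_{1}^{\theta}$, and because $\theta \leqslant 3/2 < 2$, Young's inequality splits this into $c_{m}(H_{0}^{?}+H_{1})$; more directly, using $H_{0}\leqslant L^{2}H_{1}$ (Poincar\'e) one gets $\|\bu\|_{2m}^{2}\leqslant c_{m} L^{2(1-\theta)} H_{1}$. Time-averaging and inserting $\langle H_{1}\rangle_{T}\leqslant \nu^{3}L^{-1}\Rey^{3}$ yields $\langle\|\bu\|_{2m}^{2}\rangle_{T}\leqslant c_{m}\Rey^{3}$ up to the dimensional prefactor, which is \eref{u2} in the nondimensionalised form used in this section.

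The main obstacle is keeping the dimensional bookkeeping honest: the interpolation exponent $\theta$ depends on $m$, so the power of $L$ floating out of \eref{GN-u} must be tracked and shown to combine with the $\nu^{3}k_{f}^{4}$-type prefactor in $\langle H_{1}\rangle_{T}$ to leave exactly the $\Rey^{3}$-dependence claimed, with an $m$-dependent but $\nu$-, $L$-uniform constant $c_{m}$. A secondary subtlety is the $m=\infty$ endpoint: \eref{GN-u} with $\theta=3/2$ is borderline but valid in $3D$ for divergence-free fields, and one should note (as the surrounding text does) that this is precisely the relaxation \eref{rnse1} applied at the $n=0$ level, so no genuinely new analytic input is needed beyond the ladder inequality and the energy-dissipation bound already in hand. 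Finally one remarks that the shell-model analogue of $\langle\|\bu\|_{2m}^{2}\rangle_{T}$, namely $\langle H_{0,m}^{1/m}\rangle_{T}$, scales only as $\Rey^{2}$ by \eref{eq:H0M}, which is the advertised sense in which the shell-model velocity estimates are milder even than those of the VGA-NSEs.
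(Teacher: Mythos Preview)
Your interpolation step \eref{GN-u} is the problem. The Gagliardo--Nirenberg inequality
\[
\|\bu\|_{2m} \leqslant c_{m}\,\|\bu\|_{2}^{1-\theta}\|\bnabla\bu\|_{2}^{\theta}\,,\qquad \theta=\tfrac{3}{2}-\tfrac{3}{2m}\,,
\]
is valid only when $0\leqslant\theta\leqslant 1$, i.e.\ $m\leqslant 3$: this is precisely the $3D$ Sobolev endpoint $H^{1}\hookrightarrow L^{6}$. For the range $m>3$ of the lemma you have $\theta\in(1,\tfrac{3}{2}]$, and the inequality is simply false --- $H^{1}$ does not embed into any $L^{p}$ with $p>6$ in three dimensions. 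Your $m=\infty$ endpoint estimate $\|\bu\|_{\infty}\leqslant c\,\|\bu\|_{2}^{1/4}\|\bnabla\bu\|_{2}^{3/4}$ is likewise false in $3D$. The Poincar\'e manoeuvre does not rescue the argument either: with $1-\theta<0$ the bound $H_{0}\leqslant L^{2}H_{1}$ gives $H_{0}^{1-\theta}\geqslant L^{2(1-\theta)}H_{1}^{1-\theta}$, so the inequality points the wrong way.

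The paper's route differs at exactly this point. It interpolates against $\|\bnabla\bu\|_{\infty}$ (i.e.\ in $W^{1,\infty}$) rather than $\|\bnabla\bu\|_{2}$, which \emph{does} reach $L^{2m}$ for all $m$, and only afterwards invokes the VGA replacement $\|\bnabla\bu\|_{\infty}\to L^{-3/2}\|\bnabla\bu\|_{2}$ from \eref{rnse1}. This yields the ``relaxed Sobolev'' bound $\|\bu\|_{2m}\leqslant c\,L^{-(m-3)/2m}\|\bnabla\bu\|_{2}$ for $m>3$, after which squaring and averaging against $\left<H_{1}\right>_{T}\leqslant c\,\Rey^{3}$ finishes. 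The VGA hypothesis is therefore essential input here, not a decoration; your argument attempts to obtain the $n=0$ bound without ever using it, and that cannot work beyond the Sobolev-critical value $m=3$.
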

\begin{remark}
The problem lies in understanding what is happening in the range $1 \leqslant m \leqslant 3$, which remains an open problem.
\end{remark}
\begin{proof}
Firstly we note that
\bel\label{gni1}
\|\bu\|_{2m} \leqslant c\, \|\bnabla\bu\|_{\infty}^{a}\|\bu\|_{p}^{1-a}
\ee
for $p > 2m$ and $a= \frac{3(p-2m)}{p(3+2m)}$. Moreover, 
\bel\label{gni2}
\|\bu\|_{p} \leqslant c\, \|\bnabla\bu\|_{\infty}^{A}\|\bu\|_{p}^{1-A}
\ee
where $A=\frac{3(p-2m)}{p-6}$, $p > 6$, $m > 3$. Then the $L^{\infty}\to L^{2}$ replacement as in \eref{rnse1} gives 
\bel\label{gni3}
\|\bu\|_{2m} \leqslant c\,L^{-\frac{m-3}{2m}}\|\bnabla\bu\|_{2}\qquad\mbox{for}\qquad m > 3\,.
\ee
This is exactly a `less intermittent' form of Sobolev's inequality which allows some variation in the $L^{2m}$-norm on the left-hand side instead of $L^{6}$ alone, as in its standard form.
\end{proof} 
Comparing \eref{eq:H0M} with \eref{u2} shows that the equivalence between the shell model and the VGA-NSEs only holds at the level of the velocity derivatives. In shell models, the dependence of the velocity field on $\Rey$ is even weaker than in the VGA-NSEs.

\section{Simulations and concluding remarks}\label{sect:conclusions}

To test the mathematical estimates, we have performed numerical simulations of the Sabra model. The parameters are the typical ones used in studies of $3D$ turbulence\,: $a=1$, $b=c=-1/2$, $k_0=2^{-4}$, $\lambda=2$ \cite{oy87}.
The forcing has the form $f_j=\mathscr{F}\delta_{j,1}$ with $\mathscr{F}=5\times 10^{-3}(1+\rmi)$, and the viscosity is varied between $\nu=10^{-7}$ and $\nu \approx 6\times 10^2$. We truncate the system to $N$ shells by imposing the additional boundary conditions $u_{N+1}=u_{N+2}=0$, where $N$ is varied between 8 and 27 depending on the value of $\nu$.  The numerical integration uses a second-order slaved Adams--Bashforth scheme \cite{pisarenko} with time step $\rmd t= 10^{-4}$.

Figures \ref{fig:1} to \ref{fig:3} show $\epsilon$, $\Gr$, and $\Big\langle H_{n,m}^\frac{2}{m(n+1)}\Big\rangle_T$ for different values of $n$ and $m$ as a function of $\Rey$. To facilitate the reading of the figures, the relevant definitions and estimates are summarized in Table~\ref{table}. The values of $\Rey$ vary from the `laminar' regime, in which the shell model relaxes to a fixed point, to the fully turbulent regime, which is characterized by a $k_j^{-5/3}$ spectrum over several decades of wavenumbers. 

\begin{table}
\small
\centering
\begin{tabular}{l c l c}
\hline
Definition && Estimate & Reference
\\
\hline
\\
$\epsilon = \nu \lgl \Sum{j} k_j^2 \vert u_j^\phexp\vert^2 \rgl$ && 
\parbox{4.6cm}{$\epsilon \leqslant \nu^3 k_f^4\left(c_1{\Rey}^2+c_2 \Rey^3\right)$\\[2mm]
$\epsilon\, U^{-3}k_f^{-1} \leqslant c_1\Rey^{-1} + c_2$}
& \parbox{0.7cm}{\eref{eq:ineq-Re} \\[2mm] \eref{eq:epsilon-physical}}
\\[9mm]
$\Gr = \scrF / \nu^2 k_f^3$ && $\Gr \leqslant c_1' \Rey + c'_2 \Rey^2$ & \eref{eq:Gr-Re}
\\[5mm]
$H_{n}=\Sum{j} k_j^{2n}\vert u_j^\phexp\vert^2$ && $\Big\langle H_n^{\frac{2}{n+1}}\Big\rangle_T \leqslant \hat{c}_n \, \nu^\frac{4}{n+1} k_f^4\, \Rey^3 \quad (n\geqslant 1,\,\Rey\gg 1)$ & \eref{eq:FGT}
\\[7mm]
$H_{n,m}=\Sum{j} k_j^{2nm}\vert u_j^\phexp\vert^{2m}$ && $\Big\langle H_{n,m}^{\frac{2}{m(n+1)}}\Big\rangle_T \leqslant \hat{c}_{n}\, \nu^{\frac{4}{(n+1)}} k_f^4 \, \Rey^3
\quad (n\geqslant 1,\, \Rey\gg 1)$ & \eref{eq:HNM}
\\[7mm]
$H_{0,m}=\Sum{j}\vert u_j^\phexp\vert^{2m}$ && $\lgl H_{0,m}^{1/m}\rgl\leqslant \nu^2 k_f^2 \Rey^2$  & \eref{eq:H0M}
\\[7mm]
&& $\lgl \Big(\supj k_j^n \vert u_j^\phexp \vert\Big)^\frac{4}{n+1} \rgl \leqslant \hat{c}_{n}\, \nu^{\frac{4}{(n+1)}} k_f^4 \, \Rey^3 \quad (n\geqslant 1)$  & \eref{eq:infty-norm}
\\[7mm]
&& $\lgl \Big(\supj \vert u_j^\phexp \vert\Big)^{2} \rgl \leqslant \nu^2 k_f^2 \, \Rey^2$ & \eref{eq:infty-norm-n_0}
\\
\\
\hline
\end{tabular}
\caption{Summary of the main estimates and definitions. The $\OT$ corrections have not been included for simplicity.}
\label{table}
\end{table}

The simulations clearly show that the mathematical estimates in Table~\ref{table} accurately describe the behaviour of the shell model as a function of \textit{Re}. Figure~\ref{fig:2}(b) also indicate that, for $\Rey\ll 1$, the scaling of $\Big\langle H_{n}^\frac{2}{(n+1)}\Big\rangle_T$ depends on $n$, as may be inferred from the proof of Theorem~\ref{thm:FGT} (see \eref{eq:bound-A} to \eref{eq:bound-A1}). Related to this, in Fig.~\ref{fig:3}(a) the small-$\Rey$ scaling of 
$\Big\langle H_{n,m}^\frac{2}{m(n+1)}\Big\rangle_{T}$ depends on $n$ but not on $m$, as a consequence of $H_{n,m}$ being controlled by $H_{n}^m$ (see \eref{eq:LHNM-HN}).
\par\medskip
Our conclusion is that shell models behave more closely to the $3D$ VGA-NSEs than the NSEs themselves. They both have identical scaling exponents in their time averages of their velocity derivatives which are reflected in the suppression of strong events of $\bnabla\bi{u}$, as proposed in equation (\ref{rnse1}). The actual properties of shell models for the velocity field itself are even milder than the estimates for the VGA-NSEs\,: compare (\ref{eq:H0M}) in Table 1 with \eref{u2}.
\par\medskip

Finally, we ask how much more regularity do solutions of shell models possess than those for the NSEs? This is shown up by comparing the estimates for velocity derivatives. Consider the scaling exponents $\alpha_{n,m}$ defined in (\ref{alphadef}) which appear in (\ref{Fnmest1}). It is not difficult to replicate this result in $D= 3,\,2,\,1$ dimensions \cite{gibbon20}. 
$\alpha_{n,m}$ is replaced by $\alpha_{n,m,D}$ 
\bel\label{alphaDdef}
\alpha_{n,m,D} = \frac{2m}{2m(n+1)-D}
\ee
and the relation involving $F_{n,m}$ in (\ref{Fnmest1}) is replaced by
\bel\label{D0a}
\left< F_{n,m,D}^{(4-D)\alpha_{n,m,D}}\right>_{T} \leqslant c_{n,m,D}\Rey^{3}\,.
\ee
In all these estimates, the larger the exponent the more regularity we have. Under what conditions is the $4/(n+1)$ of shell models greater than $(4-D)\alpha_{n,m,D}$? 
\bel\label{D0b}
\frac{4}{n+1} \geq (4-D)\alpha_{n,m,D}\,? 
\ee
The answer turns out to be 
\bel\label{D0c}
2D\left\{m(n+1) -2\right\} \geqslant 0\,,
\ee
and is thus always true when $n \geqslant 1$ and $m \geqslant 1$ for every value of $D$. Equality holds only at the level of the energy dissipation rate when $n=m=1$. The same result implies that, exception made for the time-averaged dissipation rate, the $\Rey$-dependence of the velocity derivatives is weaker for shell models than for the $D$-dimensional NSEs for any integer $D$.
Curiously, in a formal manner, equality also holds in the limit $D\to 0$, which corresponds to the ``Navier--Stokes equations on a point'', which has zero dimension. Given that shell models have no spatial variation the physical correspondence between the two is intriguing.

\begin{figure}
\setlength{\unitlength}{\textwidth}
\includegraphics[width=0.508\textwidth]{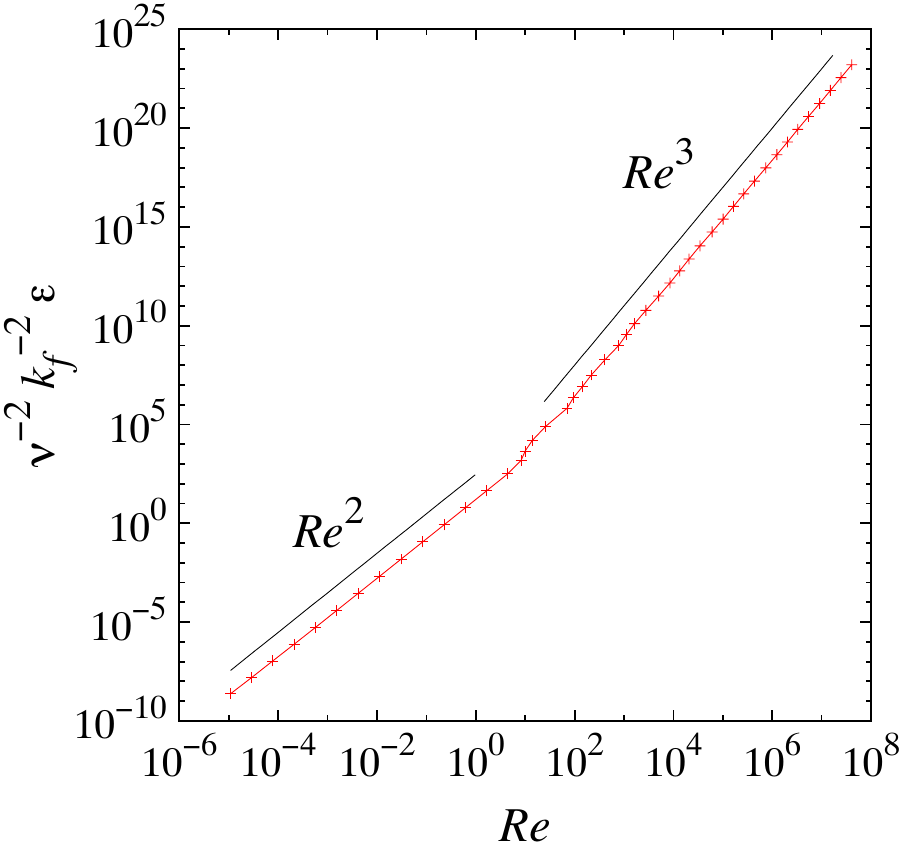}\hfill%
\includegraphics[width=0.492\textwidth]{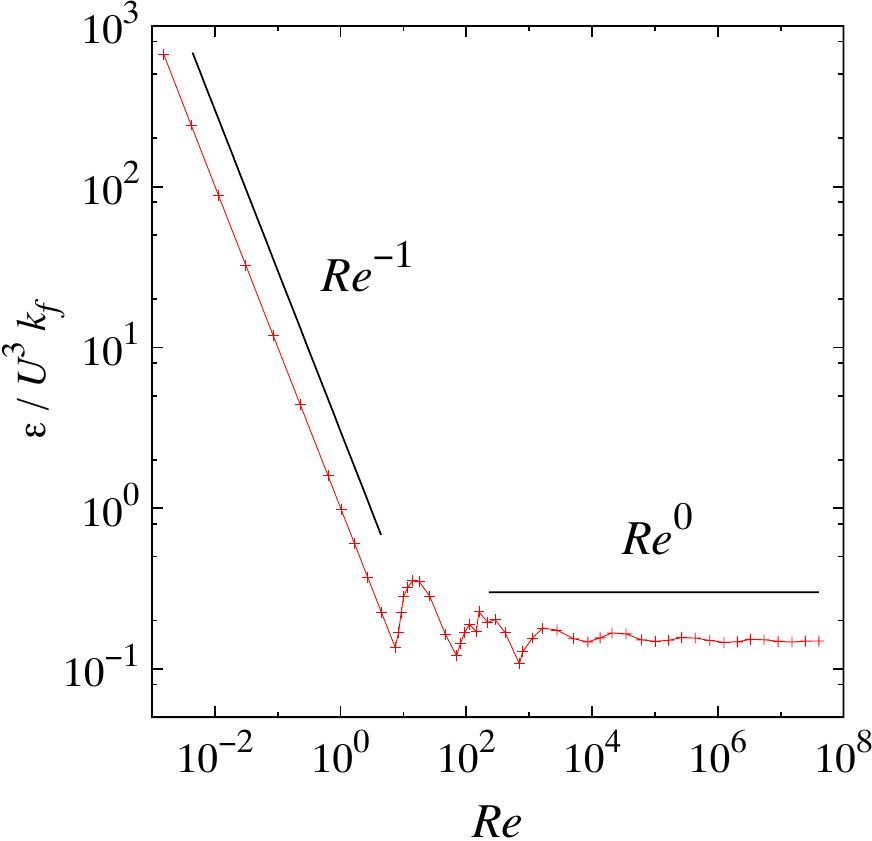}
\caption{Rescaled time-averaged energy dissipation rate as a function of the Reynolds number.}
\label{fig:1}
\end{figure}
\begin{figure}
\setlength{\unitlength}{\textwidth}
\includegraphics[width=0.488\textwidth]{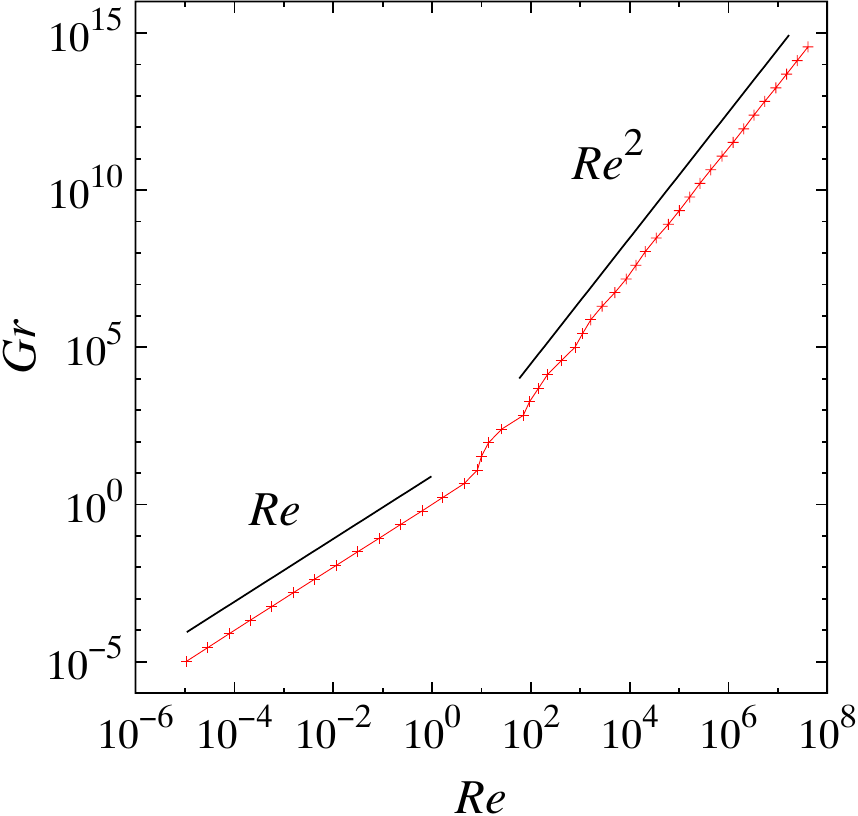}\hfill
\includegraphics[width=0.512\textwidth]{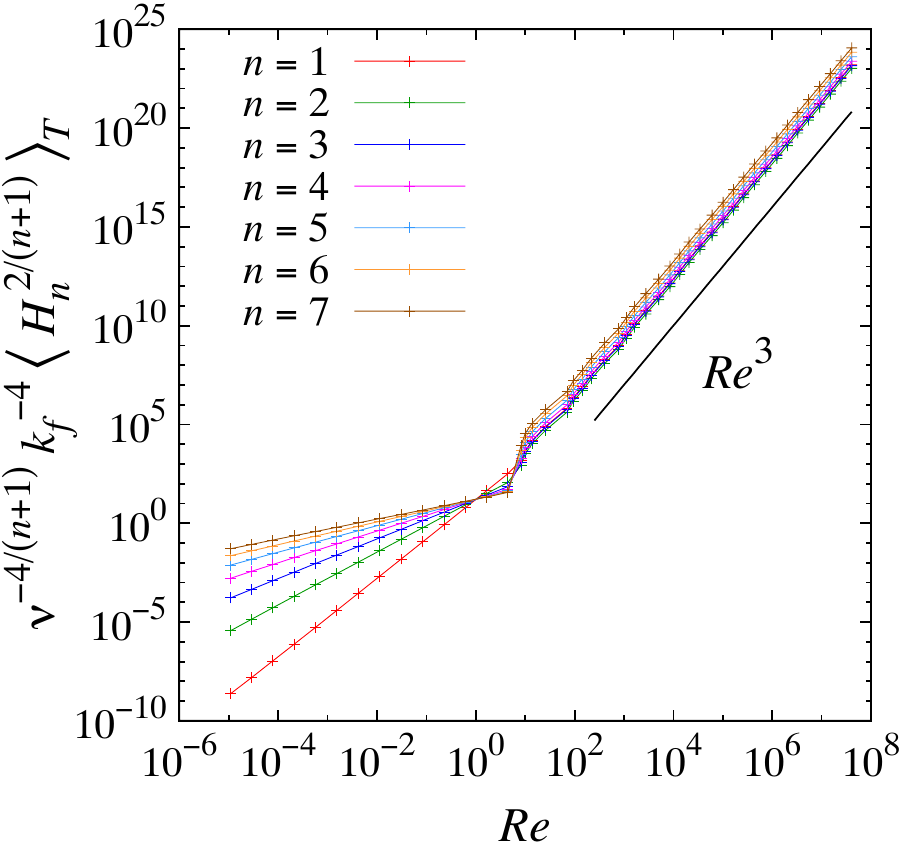}
\caption{(a) Grashof number as a function of the Reynolds number; (b) Time average of $H_n^\frac{2}{n+1}$ rescaled by
$\nu^\frac{4}{n+1} k_f^4$ as a function of the Reynolds number.}
\label{fig:2}
\end{figure}
\begin{figure}
\setlength{\unitlength}{\textwidth}
\includegraphics[width=0.502\textwidth]{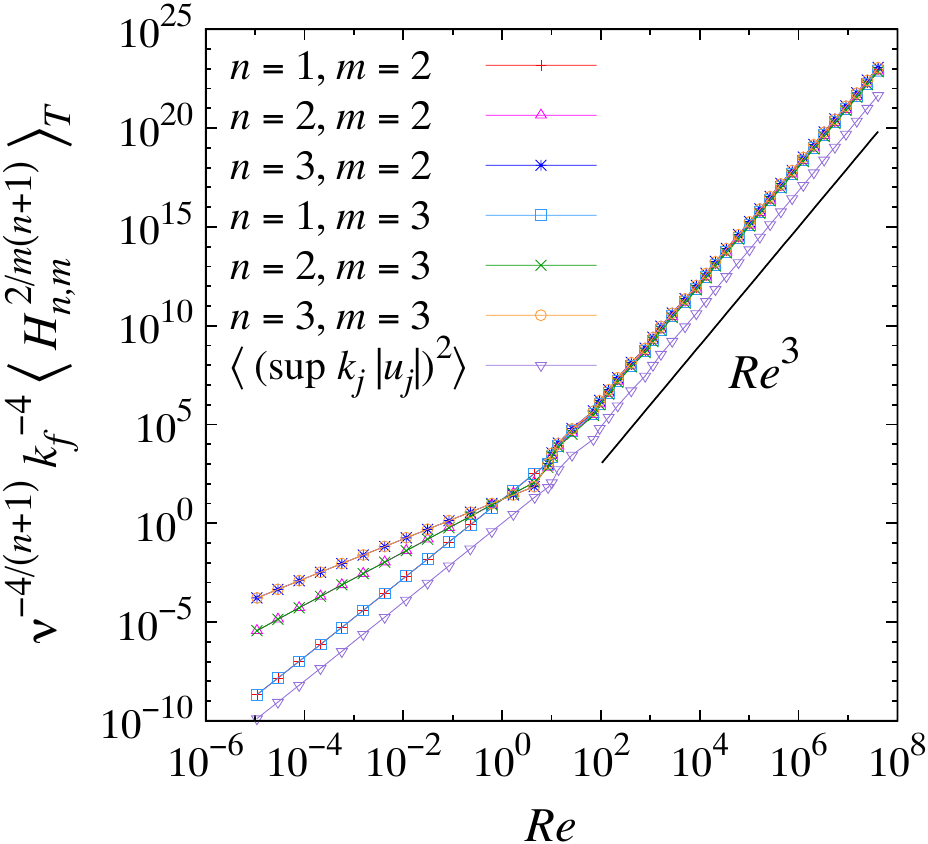}\hfill%
\includegraphics[width=0.498\textwidth]{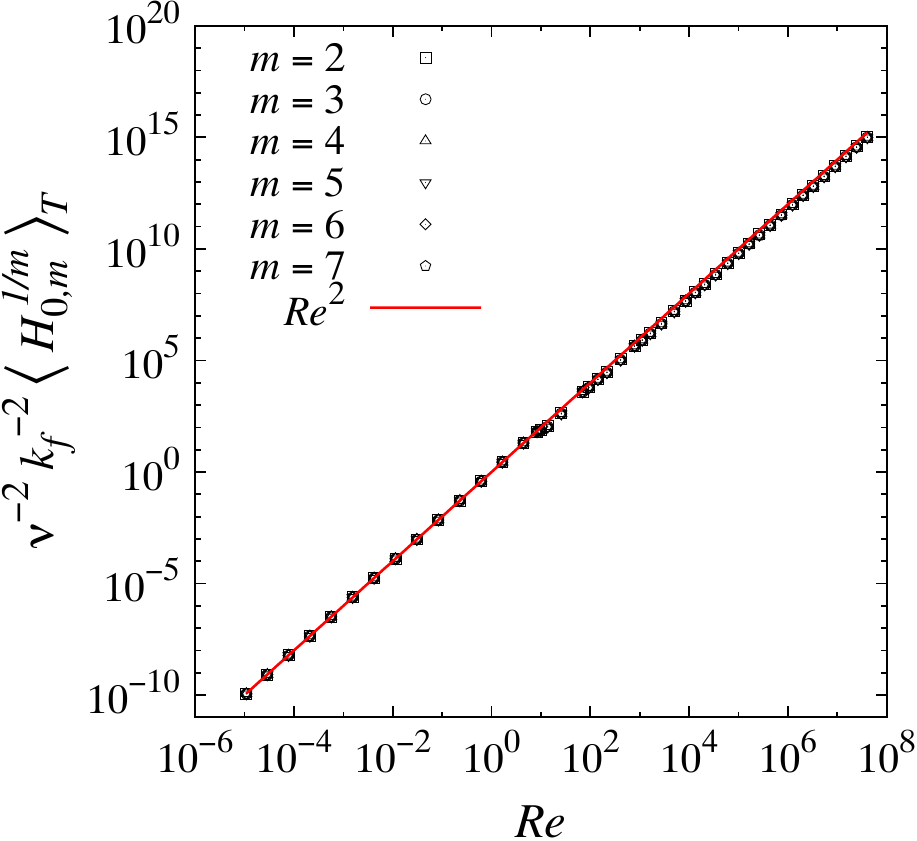}%
\caption{Time averages of (a) $H_{n,m}^\frac{2}{m(n+1)}$ rescaled by
$\nu^\frac{4}{n+1} k_f^4$ and (b) $H_{0,m}^\frac{1}{m}$ rescaled by $\nu^2 k_f^2$ as a function of the Reynolds number.}
\label{fig:3}
\end{figure}


\ack
The authors are grateful to Samriddhi Sankar Ray for several useful suggestions.
John Gibbon acknowledges the award of a Visiting Professorship at the Universit\'e C\^ote d’Azur during the months of November 2018 and April 2019 and the kind hospitality of the Laboratoire J. A. Dieudonn\'e.

\appendix
\section*{Appendix}
\setcounter{section}{1}


Inequality \eref{eq:triangle-ineq} is proved by induction on $n$ \cite{dg95}. By using the Cauchy--Schwarz inequality, we find
\bel
H_1 = \Sum{j} \vert u_j^\phexp\vert  \left( k_j^2 \vert u_j^\phexp\vert \right) \leqslant H_0^\frac{1}{2} H_{2}^\frac{1}{2} \,.
\ee
We then assume
\bel
H_n \leqslant H_0^\frac{1}{n+1} H_{n+1}^\frac{n}{n+1} 
\ee
and estimate $H_{n+1}$ as
\bel
H_{n+1} \leqslant H_{n \vphantom{+2}}^\frac{1}{2} H_{n+2}^\frac{1}{2} \leqslant H_0^\frac{1}{2(n+1)} H_{n+1}^\frac{n}{2(n+1)}  H_{n+2}^\frac{1}{2}\,,
\ee
which yields
\bel
H_{n+1} \leqslant H_0^\frac{1}{n+2} H_{n+2}^\frac{n+1}{n+2} \,.
\ee
This completes the proof by induction.

\Bibliography{30}

\bibitem{uriel} Frisch U 1995 {\it Turbulence: The Legacy of A.~N. Kolmogorov} (Cambridge, UK: Cambridge University Press)

\bibitem{SrAn97} Sreenivasan K R and Antonia R A 1997 The phenomenology of small-scale turbulence
{\it Annu. Rev. Fluid Mech} \textbf{29}, 435--472

\bibitem{Sr99}  Sreenivasan K R 1999 Fluid turbulence {\it Rev. Mod. Phys.} \textbf{71} 383--95

\bibitem{PD2004} Davidson P A 2004 {\it Turbulence\,: An Introduction for Scientists \& Engineers}, (1st ed) Oxford University Press


\bibitem{sreenivasan} Sreenivasan K~R 1984 On the scaling of the turbulence energy dissipation rate {\it Phys. Fluids} 
\textbf{27} 1048--51

\bibitem{CRD2009} Doering C R 2009 The 3D Navier--Stokes Problem
{\it Annu. Rev. Fluid Mech.} \textbf{41} 109--28

\bibitem{bdgm93}
Bartuccelli M V, Doering C R, Gibbon J D and Malham S J A 1993
Length scales in solutions of the Navier--Stokes equations
{\it\NL} {\bf 6} 549--68

\bibitem{gibbon12}
Gibbon J D 2012
A hierarchy of length scales for weak solutions of the three-dimensional Navier--Stokes equations
{\it Commun. Math. Sci.} {\bf 10} 131--6

\bibitem{gibbon19}
Gibbon J D 2019
Weak and strong solutions of the 3$D$ Navier--Stokes equations and their relation to a chessboard of convergent inverse length scales {\it J. Nonlinear Sci.} {\bf 29}, 215--28

\bibitem{gibbon20}
Gibbon J D 2020 {Intermittency, cascades and thin sets in three-dimensional Navier-Stokes turbulence}, {\it EPL}, in press.

\bibitem{nelkin90}
Nelkin M 1990 
Multifractal scaling of velocity derivatives in turbulence
{\it Phys. Rev.} A {\bf 42}, 7226–9

\bibitem{bbpvv91}
Benzi R, Biferale L, Paladin G, Vulpiani A and Vergassola M 1991
Multifractality in the statistics of the velocity gradients in turbulence
{\it Phys. Rev. Lett.} {\bf 67}, 2299--302

\bibitem{ssy07}
Schumacher J, Sreenivasan K~R and Yakhot~V 2007
Asymptotic exponents from low-Reynolds number flows 
{\it New J. Phys.} {\bf 9}, 89

\bibitem{cfpr12}
Chakraborty S, Frisch U, Pauls W and Ray S~S 2012
Nelkin scaling for the Burgers equation and the role of high-precision calculations
{\it Phys. Rev.} E {\bf 85}, 015301(R)

\bibitem{dggkpv13}
Donzis D, Gibbon J~D, Gupta A, Kerr R~M, Pandit R and Vincenzi D 2013
Vorticity moments in four numerical simulations of the 3D Navier–Stokes equations
{\it J. Fluid Mech.} {\bf 732}, 316–31

\bibitem{landau}
Landau L~D and Lifshitz E~M 1987
{\it Fluid Mechanics}, 2nd ed. (Burlington, MA: Butterworth-Heinemann)

\bibitem{MM1998} Moin P and Mahesh K 1998 Direct numerical simulation: A tool in turbulence research
{\it Annu. Rev Fluid Mech} \textbf{30}, 539--78


\bibitem{celani07}
Celani A 2007 The frontiers of computing in turbulence: challenges and perspectives {\it J. Turbul.} {\bf 8} N34

\bibitem{IGK} Ishihara T, Gotoh T and Kaneda Y 2009 
Study of high-Reynolds number isotropic turbulence by direct numerical simulation
{\it Ann Rev Fluid Mech} \textbf{41} 165--80

\bibitem{DYS}Donzis D, Yeung P K and Sreenivasan K R 2008 
Dissipation and enstrophy in isotropic turbulence: Resolution effects and scaling in direct numerical simulations
{\it Phys. Fluids} \textbf{20}, 045108

\bibitem{RMK2012} Kerr R M 2012 Dissipation and enstrophy statistics in turbulence: are the simulations and mathematics converging? {\it J Fluid Mech.} \textbf{700} 1--4



\bibitem{bohr}
Bohr T, Jensen M~H, Paladin G and Vulpiani A 1998
{\it Dynamical Systems Approach to Turbulence} (Cambridge, UK: Cambridge University Press)

\bibitem{biferale}
Biferale L 2003
Shell models of energy cascade in turbulence
{\it Annu. Rev. Fluid Mech.} {\bf 35} 441--68

\bibitem{ditlevsen}
Ditlevsen P~D 2010
{\it Turbulence and Shell Models} (Cambridge, UK: Cambridge University Press)

\bibitem{clt06}
Constantin P, Levant B and Titi E S 2006
Analytic study of shell models of turbulence
{\it Physica} D {\bf 219} 120--41

\bibitem{clt07}
Constantin P, Levant B and Titi E S 2007
Sharp lower bounds for the dimension of the global attractor of the Sabra shell model of turbulence
{\it J. Stat. Phys.} {\bf 127} 1173--92

\bibitem{flandoli}
Barbato D, Barsanti M, Bessaih H and Flandoli F 2006 
Some rigorous results on a stochastic GOY model 
{\it J. Stat. Phys.} {\bf 125} 677--716

\bibitem{sabra}
L'vov V~S, Podivilov E, Pomyalov A, Procaccia I and Vandembroucq D 1998
Improved shell model of turbulence

{\it Phys. Rev.} E {\bf 58} 1811--22

\bibitem{gledzer}
Gledzer E~B 1973 
System of hydrodynamic type admitting two quadratic integrals of motion
{\it Sov. Phys. Dokl.} {\bf 18} 216--217

\bibitem{oy87}
Yamada M and Ohkitani K 1987
Lyapunov spectrum of a chaotic model of three-dimensional turbulence
{\it J. Phys. Soc. Japan} {\bf 56} 4210--3

\bibitem{df02}
Doering C R and Foias C 2002
Energy dissipation in body-forced turbulence
{\it J. Fluid Mech.} {\bf 467} 289--306

\bibitem{giorgio}
Gibbon J D, Gupta A, Krstulovic G, Pandit R, Politano H, Ponty Y, Pouquet A, Sahoo G, and Stawarz J 2016
Depletion of nonlinearity in magnetohydrodynamic turbulence: Insights from analysis and simulations
{\it Phys. Rev.} E {\bf 93} 043104

\bibitem{ggpp18}
Gibbon J D, Gupta A, Pal N and Pandit R 2018
The role of BKM-type theorems in $3D$ Euler, Navier--Stokes and Cahn--Hilliard--Navier--Stokes analysis
{\it Physica} D {\bf 376--377}, 60--8

\bibitem{bdg91}
Bartuccelli M, Doering C and Gibbon J D 1991
Ladder theorems for the 2D and 3D Navier--Stokes equations on a finite periodic domain
{\it \NL} {\bf 4} 531--42



\bibitem{dg95} Doering C R and Gibbon J D 1995 {\it Applied Analysis of the Navier--Stokes Equations} (Cambridge, UK: Cambridge University Press)

\bibitem{CF88} Constantin P and Foias C. 1988 {\it Navier--Stokes Equations} (Chicago USA: Chicago University Press) 

\bibitem{FMRT} Foias C., Manley O., Rosa R. and Temam R. 2001 {\it Navier--Stokes Equations and Turbulence}  (Cambridge,  UK: Cambridge University Press)

\bibitem{fgt}
Foias C, Guillop\'e C and Temam R 1981 New a priori estimates for Navier--Stokes equations in dimension 3 {\it  Comm. Part. Differ. Equat.} {\bf 6} 329--59

\bibitem{gilbert}
Gilbert T, L'vov V S, Pomyalov A and Procaccia I 2002
Inverse cascade regime in shell models of two-dimensional turbulence
{\it Phys. Rev. Lett.} {\bf 89} 074501 

\bibitem{bcr00}
Boffetta G, Celani A and Roagna D 2000
Energy dissipation statistics in a shell model of turbulence 
{\it Phys. Rev.} E {\bf 61} 3234--6

\bibitem{mov02}
Mazzi B, Okkels F and Vassilicos J~C 2002
A shell-model approach to fractal-induced turbulence 
{\it Eur. Phys. J.} B {\bf 28} 243--51

\bibitem{bclst04}
Biferale L, Cencini M, Lanotte A~S, Sbragaglia M and Toschi F 2004
Anomalous scaling and universality in hydrodynamic systems with power-law forcing 
{\it New J. Phys.} {\bf 6} 37

\bibitem{dg02}
Doering C~R and Gibbon J~D 2002
Bounds on moments of the energy spectrum for weak solutions of the three-dimensional
Navier--Stokes equations 
{\it Physica} D {\bf 165} 163--75


\bibitem{Const1991} Constantin P 1991 {\it Remarks on the Navier-Stokes equations} in: Sirovich, L. (ed.) New Perspectives in Turbulence, pp. 229–261 (Berlin\,: Springer).

\bibitem{pisarenko}
Pisarenko D, Biferale L, Courvoisier D, Frisch U and Vergassola M 1993
Further results on multifractality in shell models
{\it Phys. Fluids} A {\bf 5} 2533--8

\endbib

\end{document}